\let\NAT@parse\undefined
\newcommand{\R}{\mathbb{R}}
\newtheorem{theorem}{Theorem}
\newtheorem{lemma}{Lemma}
\newtheorem{corollary}{Corollary}
\newtheorem{definition}{Definition}
\theoremstyle{definition}
\newtheorem{example}{Example}
\newcommand{\cmark}{\color{PineGreen}\ding{51}}%
\newcommand{\xmark}{\color{red}\ding{55}}%
\title{\LARGE \bf
On the properties of path additions for traffic routing
}
\author{Matteo Bettini$^{1,*}$ and Amanda Prorok$^{1}$
\thanks{$^*$Corresponding author}
\thanks{$^{1}$Matteo Bettini and Amanda Prorok are with the Prorok Lab and the Department of
Computer Science and Technology, University of Cambridge, Cambridge,
United Kingdom, e-mail: \{\tt\small{mb2389,asp45\}@cl.cam.ac.uk}}}%
\begin{document}

\maketitle
\thispagestyle{empty}
\pagestyle{empty}

\begin{abstract}
In this paper we investigate the impact of path additions to transport networks with optimised traffic routing. 
In particular, we study the behaviour of total travel time, and consider both self-interested routing paradigms, such as User Equilibrium (UE) routing, as well as cooperative paradigms, such as classic Multi-Commodity (MC) network flow and System Optimal (SO) routing.
We provide a formal framework for designing transport networks through iterative path additions, introducing the concepts of \textit{trip spanning tree} and \textit{trip path graph}. Using this formalisation, we prove multiple properties of the objective function for transport network design.
Since the underlying routing problem is NP-Hard, we investigate properties that provide guarantees in approximate algorithm design. Firstly, while Braess' paradox has shown that total travel time is \textit{not} monotonic non-increasing with respect to path additions under self-interested routing (UE), we prove that, instead, monotonicity holds for cooperative routing (MC and SO). This result has the important implication that cooperative agents make the best use of redundant infrastructure.
Secondly, we prove via a counterexample that the intuitive statement `adding a path to a transport network always grants greater or equal benefit to users than adding it to a superset of that network' is false. In other words we prove that, for all the routing formulations studied, total travel time is \textit{not} supermodular with respect to path additions. While this counter-intuitive result yields a hardness property for algorithm design, we provide particular instances where, instead, the property of supermodularity holds.  Our study on monotonicity and supermodularity of total travel time with respect to path additions provides formal proofs and scenarios that constitute important insights for transport network designers.

\end{abstract}

\begin{keywords}
Transport network design, Supermodularity, Traffic routing.
\end{keywords}

\section{INTRODUCTION}
Traffic routing is a core transportation problem. It is concerned with directing multiple agents, with aligned or conflicting interests, to their destinations. Relevant applications include: road traffic~\cite{patriksson2015traffic}, air traffic~\cite{bertsimas1998air}, crowd control~\cite{bellomo2008modelling}, and internet packets~\cite{ash1997dynamic}. 
With the advent of Connected Autonomous Vehicles (CAVs), we expect mobility paradigms to undergo a fundamental change, as we transition from (self-interested) human control to semi- (or fully) automated cooperative paradigms~\cite{bagloee2016autonomous}.
This will lead to a decrease in unpredictable routing behaviours caused by human decision making, paving the way for greater traffic efficiency~\cite{duarte2018impact, hyldmar2019fleet}. Despite of this shift from non-automated to semi- (or fully) automated traffic control, in most cases, the underlying infrastructure (e.g., transport network) remains the one that was originally designed for human control. 


%

While many research efforts focus on developing algorithms for traffic planning and control, the transport network is often regarded as a fixed constraint. On the other hand, as we can observe from historic trends, every change to mobility systems was followed by a change to the transport network infrastructure~\cite{wolf1996car}. We aim to anticipate this trend by investigating the properties of environment modifications under several (fixed) routing strategies. 

The problem of optimizing the transport network topology is known as the Network Design Problem (NDP)~\cite{leblanc1975algorithm}. This problem is generally NP-hard~\cite{magnanti1984network} and, thus, optimal solutions for large instances are computationally infeasible. Analysing the properties of the objective function, on the other hand, could provide us with additional tools for the design of approximation algorithms with known suboptimality bounds. In particular, for set functions that satisfy the property of diminishing returns, also known as sub/supermodularity, near-optimal approximations can be found~\cite{fujishige2005submodular, prorok2020robust}. It is thus fundamental to provide an analysis of the properties of objective functions for network design under different routing objectives and paradigms.

\begin{table}[t!]
\caption{Summary of our results on the properties of total travel time with respect to path additions in transport networks. Multi-Commodity (MC) and System Optimal (SO) are cooperative routing formulations while User Equilibrium (UE) is self-interested.}
\centering
\begin{adjustbox}{width=\columnwidth}
\begin{tabular}{r c c c p{.5\linewidth}}
\toprule
Total travel time w.r.t. path additions & MC & SO & UE \\
\midrule
Monotonic non-increasing & \cmark & \cmark & \xmark  \\
& Thm~\ref{theo:mc_non_increasing} & Thm~\ref{theo:so_non_increasing} & Braess~\cite{braess2005paradox}\\
\midrule
Supermodular (general case) & \xmark & \xmark & \xmark\\
& Thm~\ref{theo:mc_supermod} & Thm~\ref{theo:so_ue_supermod} & Thm~\ref{theo:so_ue_supermod}\\
Supermodular (special case) & \cmark & \cmark & \cmark\\
& Thm~\ref{theo:parallel_agnostic} & Thm~\ref{theo:parallel_routing} & Thm~\ref{theo:parallel_routing}\\
\bottomrule
\end{tabular}
\end{adjustbox}
\label{tab:results}
\end{table}
In this work, we study the theory of transport network design through iterative path additions to derive and categorize its formal properties. We model traffic as flow traversing a transport network represented as a graph. Under this formulation, we consider three different routing problems tasked with guiding each agent to its destination: 
\textit{i)} \textit{minimum cost multi-commodity network flow problem (MC)}~\cite{ahuja1988network}, where agents experience constant travel times, \textit{ii)} \textit{system optimal (SO) routing}~\cite{roughgarden2002bad}, where travel time is dependent on edge congestion and agents cooperate to minimize the systems' cost (total travel time), and \textit{iii)} \textit{user equilibrium (UE) routing}~\cite{roughgarden2002bad}, also known as Wardrop equilibrium~\cite{wardrop1952road}, where travel time is dependent on edge congestion and agents are routed to minimize their individual travel time. These routing paradigms are coordinated as they all require a centralised routing infrastructure, following three different optimisation objectives. We furtherly refer to MC and SO as \textit{cooperative} routing, as they optimise for the system's benefit, and to UE as \textit{self-interested} routing, as it optimises for individual agents. We are interested in analysing the properties of total travel time for these routing problems under path additions to the transport network. We thus introduce the concepts of \textit{trip spanning tree} and \textit{trip path graph} which allow us to formalise what we consider an initial transport network and the space of its possible path additions. This model is then leveraged to prove our properties of interest.

Firstly, while Braess~\cite{braess2005paradox} has shown that total travel time is not monotonic non-increasing with respect to path additions under self-interested UE routing, we prove in our formulation that this does \textit{not} occur for cooperative routing (MC and SO). This result has the important implication that cooperative agents make the best use of redundant paths in the network. 

Secondly, we prove that the intuitive statement `adding a path to a transport network always grants greater or equal benefit to users than adding it to a superset of that network' is false. In other words, we prove that total travel time is not supermodular with respect to path additions. This is valid both when agents are routed according to their own interest (UE) and when they are cooperating for the system's benefit (MC and SO). We provide a counterexample to support our proofs. While this counter-intuitive result states that the supermodularity of path additions cannot be leveraged in approximation algorithms for general network optimisation, we provide particular instances where, instead, the property of supermodularity holds. In this scenario only parallel path additions are allowed, making it suitable to represent network design settings where physical additions can overlap but virtual network paths are kept parallel (e.g., by creating dedicated lanes).

 Our results are summarised in Table~\ref{tab:results}. This work contains the following contributions:
\begin{itemize}
    \item We present the problem of network design via iterative path additions and we introduce a graph structuring framework for network design which leverages the new concepts of \textit{trip spanning tree} and \textit{trip path graph}. Braess' example~\cite{braess2005paradox} fits in this framework.
    \item We prove that, for cooperative agents, the total travel time is monotonic non-increasing with respect to path additions.
    \item We prove with a counterexample that, in the general case, the total travel time is \textit{not} supermodular with respect to path additions for both cooperative and self-interested agents.
    \item We introduce and prove a special case where supermodularity holds (i.e., restricted supermodularity) and can thus be leveraged for designing approximation algorithms for specific network design instances.
\end{itemize}

The paper is structured in the following way: Section~\ref{sec:related_work} goes through the relevant related work, Section~\ref{sec:preliminaries} discusses some preliminaries, in Section~\ref{sec:problem_formulation} our formulation is introduced, Section~\ref{sec:routing} goes through the routing paradigms used, Section~\ref{sec:path_additions} describes our formalization of the initial transport network and its extension via path additions, Section \ref{sec:properties} investigates the properties of total travel time with respect to path additions, discussing monotonicity and non-supermodularity, to then focus on particular supermodular instances, and, finally, Section~\ref{sec:conclusion} concludes the paper.

\section{RELATED WORK}
\label{sec:related_work}
The problem of designing transport networks for traffic routing has been studied for a long time and goes by the name of Network Design Problem (NDP). It is also known as the Road Network Design Problem (RNDP)~\cite{Yang1998} and Transport Network Design Problem (TNDP)~\cite{chen2011transport}. Extensive discussion on the topic can be found in~\cite{farahani2013review, magnanti1984network, Yang1998}. The problem is usually formulated as a bilevel optimisation task. At the higher level, the network designer can make decisions that impact the network infrastructure. These decisions can be discrete (e.g., adding and removing edges) or continuous (e.g., extending the edge capacities). The objective of the network designer is usually to minimise the total travel time and, thus, the network congestion. Decisions involve physical network modifications which can be subject to a construction budget. At the lower level, agents see the network modifications and distribute in the resulting network according to their routing objective. Note that the designer cannot control directly the agents' behaviour, but only influence it through environment optimisation. The designer's goal is to minimise the total system travel time while respecting the construction budget. In other words, make the best use of network modifications in order to improve the routing performance according to the routing objective of choice. The NDP can also be viewed in a game theoretic framework, where the bilevel task is a Stackelberg leader-follower game~\cite{simaan1973stackelberg} with agents in the lower level playing a cooperative (MC, SO) or non-cooperative (UE) game depending on the routing objective.

The nature of the decision variables at the higher level divides the NDP into the Continuous Network Design Problem (CNDP), which considers continuous network modifications, and the Discrete Network Design Problem (DNDP), which considers discrete network modifications. Typical decision variables for the CNDP are: road capacities~\cite{abdulaal1979continuous, marcotte1986network, chiou2005bilevel}, traffic light schedules~\cite{marcotte1983network, ziyou2002reserve}, and road tolls~\cite{cole2006much, yang1997sensitivity}. Typical decision strategies for the DNDP are: new road construction~\cite{leblanc1975algorithm, gao2005solution, drezner2003network}, lane allocation~\cite{zhang2007two, drezner2002using}, and road capacity expansion~\cite{poorzahedy1982approximate, poorzahedy2005application}. In this work, we introduce a novel formulation for the higher level of the DNDP, where we consider adding paths to a transport network. Therefore, the space of possible path additions constitutes our decision variables. The addition of a path can lead to the construction of zero or multiple new roads.

At the lower optimisation level, most existing works consider agents routed according to some form of User Equilibrium (UE) routing: a formulation that characterizes a non-cooperative game in which every agent is self-interested and chooses the fastest path for themself according to the current congestion, leading to a Nash equilibrium. Popular variations of UE are: stochastic UE~\cite{daganzo1977stochastic} which considers uncertainty in the travel time and dynamic UE~\cite{peeta2001foundations} which considers dynamic user demands. The choice of UE for routing in the NDP literature is motivated by its fairness (agents travelling from the same origin to the same destination experience the same travel time) and by the intrinsic self-interested nature of humans. On the other hand, we argue that, with the introduction of CAVs, we will see a shift towards cooperative routing, as human self-interest will be replaced by autonomous cooperation. One example of this phenomenon is the recent introduction of autonomous mobility-on-demand vehicles~\cite{spieser2014toward}, which, by cooperating through a central routing authority, are able to reduce traffic congestion. In the literature, few works use System Optimal (SO) routing for the lower level~\cite{ben1988general, dantzig1979formulating}. Motivated by these considerations, in this work we provide results for both cooperative (MC, SO) and self-interested (UE) routing paradigms.

Our problem is an instance of the DNDP. The DNDP is generally NP-Hard~\cite{magnanti1984network}. Its hardness is caused by a series of factors. Firstly, even the most simple bilevel problem with linear upper and lower levels is strongly NP-Hard~\cite{marcotte2005bilevel, ben1988general}. Secondly, even if both levels are convex, the convexity of the bilevel problem cannot be guaranteed~\cite{luo1996mathematical}. Lastly, the discrete formulation of our decision variables highlights the combinatorial nature of the problem. Therefore, it is computationally intractable to solve the problem exactly for large instances. A branch-and-bound algorithm and a SO relaxation for the lower bounds is proposed in~\cite{leblanc1975algorithm}. Later, another branch-and-bound solution for agents routed using stochastic UE is introduced by \cite{chen1991network}. However, these exact methods are computationally infeasible for large networks and, thus, some alternatives have been explored. Branch and backtrack heuristics have been used in~\cite{poorzahedy1982approximate}, while metaheuristic hybrids based on ant colony optimisation are proposed in~\cite{poorzahedy2005application, poorzahedy2007hybrid}. The genetic algorithm represents one effective metaheuristic solution~\cite{yin2000genetic}, which has also been applied to the CNDP~\cite{xu2009study, mathew2009capacity}.  

Although metaheuristics achieve supposedly good solutions, they do not provide suboptimality guarantees. This is a main obstacle for the adoption of such network design solutions by companies and traffic regulators~\cite{bagloee2017hybrid}: the absence of guarantees lowers the marketability of heuristics and their unknown performance undermines the users' trust. Furthermore, without an available optimal solution, the goodness of metaheuristics cannot even be measured a posteriori. On the other hand, by leveraging certain properties of the objective function, it is possible to design approximation algorithms with suboptimality bounds. Such algorithms present all the characteristics required by companies and traffic regulators: they are computationally efficient while providing goodness guarantees. In particular, for set functions that satisfy the property of diminishing returns,
also known as sub/supermodularity, near-optimal approximations can be found~\cite{fujishige2005submodular}. There exist many works that exploit sub/supermodularity in particular instances of network design. An efficient topology optimisation algorithm for information diffusion is designed in~\cite{khalil2014scalable}. In \cite{mehr2018submodular}, submodularity is leveraged to achieve near-optimal sensor placement in traffic networks. Submodular network design to achieve desired algebraic rigidity properties is explored in~\cite{shames2015rigid}. Restricted supermodularity for robust network design is proved in~\cite{gupta2019budget}. While these works are able to exploit sub/supermodularity for network design, they all leverage particular problem formulations designed for their specific instance. Instead, in this work, we provide results on the properties of objective functions for network design both in general and special cases under multiple routing formulations in order to provide a comprehensive analysis.

\section{PRELIMINARIES}
We now introduce some preliminary concepts in transportation research and set functions. In Section~\ref{sec:braess}, we discuss the Braess' paradox~\cite{braess2005paradox}, which proves via counterexample that travel time of self-interested agents is not monotonic non-increasing with respect to path additions. This work relates to our study as we investigate the same property for cooperative agents. In Section~\ref{sec:supermod_backgroud}, we introduce supermodularity, a fundamental property of set functions leveraged in approximation algorithm design~\cite{fujishige2005submodular}. This property will be discussed in the context of transport network path additions in Section~\ref{sec:properties}.
\label{sec:preliminaries}
\subsection{Braess' paradox}
\label{sec:braess}
One important result in traffic research is a routing paradox discovered by Braess~\cite{braess2005paradox} in 1968, which has since become known as the Braess' paradox. Braess' paradox states that, when users are routed according to UE, the total system travel time can increase after a new path is added to the network. In other words, it presents a proof that the total travel time of users routed in a network according to UE is not monotonic non-increasing with respect to path additions. We can see how this phenomenon is not really a paradox as, when users are behaving according to UE, they behave according to self-interest and thus probably do not make the best use of the transportation infrastructure available. Braess illustrates this through a simple yet somewhat pathological example. However, Braess' paradox has been shown to be commonplace in road networks~\cite{steinberg1983prevalence, pas1997braess} and in large random networks~\cite{valiant2010braess}.

\begin{figure}[ht!]
\centering

  \begin{tikzpicture}[node distance={40mm}, thick, 
    node/.style = {draw, circle, minimum size=5mm},
    edge_label/.style = {midway, above, sloped, color=black},
    edge/.style = {->, color=black}
    ] 
    
    \node[node] (1) {$s$}; 
    \node[node] (2) [above right=1.2cm and 2cm of 1] {$v$};
    \node[node] (3) [below right=1.2cm and 2cm of 1] {$w$};
    \node[node] (4) [above right=1.2cm and 2cm of 3] {$t$};
    
    \begin{scope}[>={Straight Barb[black]}]
    \draw[edge] (1) to  node[midway, above, color=black, sloped] {$10x$} (2);
    \draw[edge] (1) to  node[midway, below, sloped, color=black] {$50+x$} (3);
    \draw[edge] (2) to  node[midway, above, color=black, sloped] {$50+x$} (4);
    \draw[edge] (3) to  node[midway, below, sloped, color=black] {$10x$} (4);
    \draw[<-] (3) to  node[midway, above, sloped,  color=black] {$10 + x$} (2);
    \end{scope}

\end{tikzpicture}

\caption{The Braess' paradox network. Labels on the edges represent the flow-dependant travel time functions. This network shows that, when a flow of 6 must be routed from $s$ to $t$ according to UE, the total travel time increases after the addition of edge $(v,w)$.}
\label{fig:braess}
\end{figure}
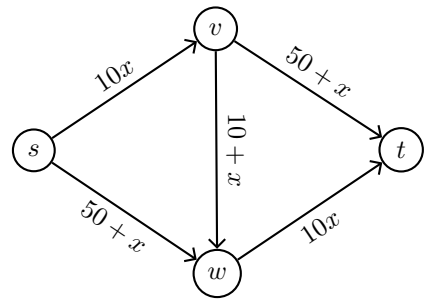

In Braess' example, shown in Fig.~\ref{fig:braess}, we consider a flow demand of 6 going from node $s$ to node $t$. The road travel time cost functions are reported on the respective edges of the graph. In the case where $(v,w)$ is not present in the network, the solutions to SO and UE routing coincide: all the users spread equally on the available paths, leading to a flow of 3 on the top edges and a flow of 3 on the bottom edges. Every user experiences the same travel time of $50 + 3 + 10 * 3 = 83$. This solution is both optimal for the users and for the system. On the other hand, if the edge $(v,w)$ is added to the network, the solution to the SO does not change, while self-interested users behaving according to UE see an opportunity to decrease their travel time by using the newly created path $(s,v) \rightarrow (v,w) \rightarrow (w,t)$. Thus, when the equilibrium is reached, we observe a flow of 2 in the previously available paths and a flow of 2 in the newly created one. Note that the flow on an edge is the sum of the flows on the paths passing through it, thus edge $(s,v)$ will have a flow of 4. In this UE scenario each user experiences the same travel time of $92$. The increase in the total travel time of users behaving accordingly to UE after the addition of edge $(v,w)$ is:
$92 * 6 - 83 * 6 = 552 - 498 = 54$.

Braess' paradox informs us that extreme cation must be used when structuring transport networks, as it presents a counterexample to prove that total travel time is not monotonic with respect to path additions under UE routing. Our work extends this result by proving that total travel time is not supermodular with respect to path additions under either MC, SO or UE routing.
\subsection{Supermodularity}
\label{sec:supermod_backgroud}
A fundamental component underlying this work is the property of supermodularity of set functions~\cite{topkis2011supermodularity}. Before introducing such property, let us first define non-increasing monotonicity.

\begin{definition}[Monotonic non-increasing set function]
Let $\Lambda$ be a set function defined as $\Lambda: 2^\mathcal{\mathcal{G}} \mapsto \R$, where $2^\mathcal{G}$ is the power set of the finite set $\mathcal{G}$. $\Lambda$ is a monotonic non-increasing function of $\mathcal{G}$ if and only if for every $\mathcal{A}$ and $\mathcal{B}$ such that $\mathcal{A} \subseteq \mathcal{B} \subseteq \mathcal{G}$, we have that $\Lambda(\mathcal{B}) \leqslant \Lambda(\mathcal{A})$.
\end{definition}
Simply put, if $\Lambda$ is monotonic non-increasing, its output cannot increase when new elements are added to its current input. Let us now introduce supermodularity.

\begin{definition}[Supermodular set function]
Let $\Lambda$ be a set function defined as $\Lambda: 2^\mathcal{\mathcal{G}} \mapsto \R$, where $2^\mathcal{G}$ is the power set of the finite set $\mathcal{G}$. $\Lambda$ is a supermodular function of $\mathcal{G}$ if and only if for every $\mathcal{A}$ and $\mathcal{B}$ such that $\mathcal{A} \subseteq \mathcal{B} \subseteq \mathcal{G}$ and every $x \in \mathcal{G} \setminus \mathcal{B}$, we have that 
$$
\Lambda(\mathcal{A}) - \Lambda(\mathcal{A} \cup \{x\}) \geqslant \Lambda(\mathcal{B}) - \Lambda(\mathcal{B} \cup \{x\}).
$$
\end{definition}

In other words, if $\Lambda$ is supermodular and it represents some notion of cost, the utility of adding an element to its current input is always greater or equal than the utility of adding the same element to a superset of its current input.

It is fundamental to understand that none of these two properties implies the other. Therefore, a supermodular function can be non-monotonic.

\section{PROBLEM FORMULATION}
\label{sec:problem_formulation}
In this work, we study traffic on transport networks as flow traversing a graph. We denote each commodity (flow origin-destination pair) as a trip. The source and destination represent the extremities of the trip, and the demand represents the number of vehicles per time unit wishing to go from that origin to that destination. The edge cost represents the travel time. \textit{From this point on we will use the terms commodity and trip interchangeably}.

Given an oriented graph $G=(\mathcal{V},\mathcal{E})$, where $\mathcal{V}$ denotes the node set and $\mathcal{E} \subseteq \mathcal{V}\times\mathcal{V}$ denotes the edge set, and a set of trips as $\mathcal{M} = \{(s^m,t^m,d^m)\}_m$, where $s^m \in \mathcal{V}$ is the origin of trip $m$, $t^m \in \mathcal{V}$ is the destination, and $d^m \in \R_{>0}$ is the trip demand. We associate to each edge $(i,j) \in \mathcal{E}$ a cost $c_{ij} \in \R_{>0}$ representing the travel time and a capacity $u_{ij} \in \R_{\geqslant 0}$ representing the maximum flow sustainable. Note that, in this formulation, we allow $c_{ij} \neq c_{ji}$. The cost can be a scalar or a function, depending on the routing formulation (see Section~\ref{sec:routing}). Each node $i \in \mathcal{V}$ has a demand $d_i^m \in \mathbb{R}, \forall m \in \mathcal{M}$.  We set the network demands $d_i^m$ as indicated by:

\begin{equation*}
\forall i \in \mathcal{V},\; \forall m \in \mathcal{M} \quad d_i^m = 
\left\{\begin{matrix} -d^m & \quad\text{if }i = s^m
\\   d^m & \quad\text{if }i = t^m
\\   0 & \quad\text{otherwise.}
\end{matrix}\right.
\end{equation*}

We observe that the total demand of the transport network is equal to zero $\sum_{i \in \mathcal{V}}\sum_{m \in \mathcal{M}}d_i^m = 0$. We represent the vehicle flow of trip $m \in \mathcal{M}$ on edge $(i,j) \in \mathcal{E}$ as $x_{ij}^m \in \R_{\geqslant 0}$, where $\sum_{m \in \mathcal{M}}x_{ij}^m = x_{ij}$ is the total flow on edge $(i,j)$. A summary of the notation used in the paper is available in~\ref{sec:notation}.

We are interested in analysing the properties of routing under path additions. Therefore, to complement our problem formulation, in the next two sections we introduce the routing paradigms used (Sec.~\ref{sec:routing}) and how we formalise path additions (Sec.~\ref{sec:path_additions}).

\section{BACKGROUND ON COORDINATED ROUTING}
\label{sec:routing}
We now introduce the three coordinated routing formulations considered in this work: MC (Sec.~\ref{sec:mcmc}), SO (Sec.~\ref{sec:so}), and UE (Sec.~\ref{sec:ue}). We refer to all of them as coordinated as they require a central authority to compute routing, but we furtherly denote MC and SO as \textit{cooperative}, as vehicles are routed for the system's benefit and UE as \textit{self-interested}, as vehicles are routed according to their own interest.
\subsection{Minimum cost multi-commodity network flow problem}
\label{sec:mcmc}
The minimum cost multi-commodity network flow problem (MC) is a classic flow problem~\cite{ahuja1988network}. It is concerned with capacitated networks in which a constant cost is associated to every edge. Given a set of trips, each with a flow demand to be routed from an origin node to a destination node, we are interested in finding the optimal flow distribution in the network such that the total cost is minimised. The shortest path problem and the maximum flow problem are special cases of the MC problem. 

The routing problem consists in determining the optimal agent distribution on the network edges in order to route all the demand $d^m$ from $s^m$ to $t^m$, $\forall m \in \mathcal{M}$ while minimising the total cost. The problem is formalised as follows:

\begin{subequations}
\begin{equation}
    \min_{\{x^m_{ij}\}}{\sum_{m \in \mathcal{M}}\sum_{(i,j) \in \mathcal{E}} x_{ij}^m c_{ij}}
    \label{eq:objective2}
\end{equation}
 s.t.
\begin{equation}
     \sum_{m \in \mathcal{M}}x_{ij}^m \leqslant u_{ij} \quad \forall (i,j) \in \mathcal{E}
    \label{eq:capacity2}
\end{equation}
\vspace{2pt}
\begin{equation}
x_{ij}^m \geqslant 0 \quad \forall (i,j) \in \mathcal{E},\;\forall m \in \mathcal{M}
\label{eq:non_zero2}
\end{equation}
\vspace{2pt}
\begin{equation}
    \sum_{\left \{j:(j,i) \in \mathcal{E} \right \}} x_{ji}^m -
    \sum_{\left \{j:(i,j) \in \mathcal{E} \right \}} x_{ij}^m = d_i^m \quad
    \forall i \in \mathcal{V},\; \forall m \in \mathcal{M},
    \label{eq:conservation2}
\end{equation}
\end{subequations}
where the objective function \eqref{eq:objective2} minimises the total travel time over all trips. Constraint \eqref{eq:capacity2} ensures that the sum of the flows from different trips on each edge does not exceed the maximum capacity. Constraint \eqref{eq:non_zero2} ensures that no flow is negative. Constraint \eqref{eq:conservation2} ensures flow conservation.
    
\textbf{Path formulation.}
According to the flow decomposition theorem, first introduced by \cite{ford2015flows}, we can reformulate the MC problem using the distribution of flow over paths instead of edges. This is valid under the assumption that for every trip there exist no negative cost cycles in our graph.

For every trip $m \in \mathcal{M}$, let $P^m$ denote the set of all possible paths\footnote{A path is a is a walk in which all vertices (and therefore also all edges) are distinct.} from $s^m$ to $t^m$ and $P = \bigcup_{m \in \mathcal{M}} P^m$ the union of such sets. The constant cost $c_p$ of a path is defined as the sum of the constant cost of its edges $c_p = \sum_{(i,j) \in p}c_{ij}$. We also define as $x_p$ the flow on path $p \in P^m$ and as $\mathbf{x}$ the vector of all path flows $x_p,\; \forall p \in P$. The path formulation goes as follows:

\begin{subequations}
\begin{equation}
    \min_{\mathbf{x}}{\sum_{p \in P} x_p c_p }
    \label{eq:objective3}
\end{equation}
 s.t.
\begin{equation}
     \sum_{p \in P:(i,j) \in p}x_p \leqslant u_{ij} \quad \forall (i,j) \in \mathcal{E}
    \label{eq:capacity3}
\end{equation}
\vspace{2pt}
\begin{equation}
x_p \geqslant 0 \quad \forall p \in P
\label{eq:non_zero3}
\end{equation}
\vspace{2pt}
\begin{equation}
    \sum_{p \in P^m} x_p = d^m \quad
    \forall m \in \mathcal{M},
    \label{eq:conservation3}
\end{equation}
\end{subequations}
where the objective function and the constraints map exactly to the ones illustrated in the edge formulation of MC.

\subsection{Flow-dependant cost}
\label{sec:flow-dependant_cost}
In reality, it is often not possible to map the travel time of road segments to a constant cost. In fact, \textit{the travel time required to traverse a road is highly dependent on its congestion}. It can be modeled by a continuous monotonic increasing function mapping the current flow on an edge to travel time. In this work, we will consider two popular travel time functions used in traffic research, but our theoretical results remain valid for any reasonable convex travel time function.

\subsubsection{Greenshields affine velocity function}
\label{sec:greenshields}

This function was introduced by Greenshields in 1935~\cite{greenshields1935study} and still remains a useful mathematical model thanks to its simplicity. It is  well-known in traffic simulation research~\cite{siebel2006fundamental, work2010traffic}. It is defined as follows:

\begin{equation}
    c_{ij}(x_{ij}) = \frac{l_{ij}}{v_{ij}^{\mathrm{max}}\left (1-\frac{x_{ij}}{u_{ij}}\right)},
    \label{eq:greenshields}
\end{equation}
where $v_{ij}^{\mathrm{max}}$ is the maximum velocity allowed on edge $(i,j)$ and $l_{ij}$ is the length of $(i,j)$. Note that the input variable $x_{ij}$ has to be a feasible assignment and thus $0 \leqslant x_{ij} \leqslant u_{ij}$. The function models a convex hyperbolic relation between congestion and travel time.

\subsubsection{The Bureau of Public Roads (BPR) cost function}
\label{sec:bpr}
Another frequently considered convex latency function is the Bureau of Public Roads (BPR) cost function~\cite{sheffi1985equilibrium, branston1976link}.  The function is computed as

\begin{equation}
    c_{ij}(x_{ij}) = c_{ij}^0 \left (1 + \alpha_{ij} \left (\frac{x_{ij}}{u_{ij}} \right )^{\beta_{ij}} \right),
    \label{eq:brp}
\end{equation}
where $c^0_{ij}$ and $u_{ij}$ are the free-flow time and capacity of link $(i,j)$, respectively, and $\alpha_{ij}$ and $\beta_{ij}$ are shape parameters which can be calibrated to data. By tuning these parameters a wide variety of functions can be represented, including the ones discussed in Section~\ref{sec:braess} used by Braess to exacerbate his paradox~\cite{braess2005paradox}.
\subsection{System optimal (SO) routing}
\label{sec:so}
Now we can use the cost functions just introduced to extend our formulation of the MC problem. Each edge $(i,j) \in \mathcal{E}$ is assigned the flow-dependant cost function $c_{ij}(x_{ij})$. The vector of edge cost functions $c_{ij}(x_{ij})$ is noted as $\mathbf{c}$. We call the triple $(G,\mathcal{M}, \mathbf{c})$ an instance. The cost of a path $p \in P$ with respect to a flow $x_{p}$ is defined as the sum of the cost of the edges in the path $c_p(x_p) = \sum_{(i,j) \in p}c_{ij}(x_p)$.

The optimal flows are then characterised by the following System Optimal (SO)~\cite{roughgarden2002bad} routing problem:

\begin{subequations}
\label{eq:so}
\begin{equation}
    \min_{\mathbf{x}}{\sum_{p \in P} x_p c_p(x_p)}
    \label{eq:sor_objective}
\end{equation}
 s.t.
\begin{equation}
x_p \geqslant 0 \quad \forall p \in P
\label{eq:non_zero_sor}
\end{equation}
\vspace{2pt}
\begin{equation}
    \sum_{p \in P^m} x_p = d^m \quad
    \forall m \in \mathcal{M},
    \label{eq:conservation_sor}
\end{equation}
\end{subequations}
where constraint~\eqref{eq:non_zero_sor} states that all flows must be non-negative and constraint~\eqref{eq:conservation_sor} imposes the conservation of flow for each trip. Note that the SO problem corresponds to the MC problem with flow-dependant travel time functions and without the strict edge capacity constraints.

Since the local and global optima of a convex function on a convex set coincide~\cite{peressini1988mathematics}, we know that a locally optimal solution for SO is also globally optimal whenever the objective function~\eqref{eq:sor_objective} is convex. By construction, we consider only convex cost functions, like the ones introduced in Section~\ref{sec:flow-dependant_cost}. The path cost $c_p(x_p)$ and the SO objective are thus also convex, since a non-negative weighted sum of convex functions is still convex. From~\cite{roughgarden2002bad} we know that a flow is locally (and globally) optimal if and only if moving flow from one path to another can only increase the global cost~\eqref{eq:sor_objective}. That is, we expect to have optimality when the marginal benefit of decreasing flow on a $s^m-t^m$ path is at most the marginal cost of increasing flow on another $s^m-t^m$ path. The following lemma formalises this intuition. Let us denote the effective cost $k_{ij}(x_{ij}) = x_{ij} c_{ij}(x_{ij})$. We write the derivative of $k_{ij}$ as $k_{ij}'$ and define $k_p'(x_p) = \sum_{(i,j) \in p}k_{ij}'(x_p)$. We then have:

\begin{lemma}[\cite{beckmann1956studies, dafermos1969traffic, roughgarden2002bad}]
The flow vector $\mathbf{x}$ computed by SO routing on an instance  $(G,\mathcal{M},\mathbf{c})$ is optimal if and only if for every $m \in \mathcal{M}$ and any pair $p_1,p_2 \in P^m$ with $x_{p_1} > 0$, we have $k_{p_1}'(x_{p_1}) \leqslant k_{p_2}'(x_{p_2})$.
\label{lemma:sor}
\end{lemma}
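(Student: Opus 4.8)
The plan is to read Lemma~\ref{lemma:sor} as the first-order (variational inequality / KKT) optimality condition for the convex program~\eqref{eq:so}, and then rewrite that condition in the ``path-swap'' form stated. As already observed in the text preceding the lemma, the objective $f(\mathbf{x}) = \sum_{p \in P} x_p c_p(x_p)$ is convex and continuously differentiable on the feasible set, and the feasible set is the polytope cut out by the per-trip demand equalities~\eqref{eq:conservation_sor} and the nonnegativity constraints~\eqref{eq:non_zero_sor}. For a convex differentiable objective over a convex set, $\mathbf{x}$ is a global minimiser if and only if $\nabla f(\mathbf{x}) \cdot (\mathbf{y} - \mathbf{x}) \geqslant 0$ for every feasible $\mathbf{y}$. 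A direct differentiation of~\eqref{eq:sor_objective} gives $\partial f / \partial x_p = k_p'(x_p)$, so this condition reads $\sum_{p \in P} k_p'(x_p)\,(y_p - x_p) \geqslant 0$ for all feasible $\mathbf{y}$.

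The second step is to identify the extreme feasible directions at $\mathbf{x}$. Since the only constraints that couple distinct path flows are the equalities $\sum_{p \in P^m} x_p = d^m$, the cone of feasible directions decomposes trip by trip and is generated by the vectors $e_{p_2} - e_{p_1}$ (where $e_p$ is the coordinate vector of path $p$) with $p_1, p_2 \in P^m$ and $x_{p_1} > 0$, i.e. by the elementary moves that shift a small amount of flow from a path currently carrying flow onto another path of the same trip. Substituting such a direction into the variational inequality yields exactly $k_{p_1}'(x_{p_1}) \leqslant k_{p_2}'(x_{p_2})$, which gives the ``only if'' direction. For the ``if'' direction, I would argue that an arbitrary feasible displacement $\mathbf{y} - \mathbf{x}$ can be written, within each trip, as a nonnegative combination of these elementary two-path transfers --- the standard decomposition of the difference of two points of a scaled simplex. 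The hypothesis $k_{p_1}'(x_{p_1}) \leqslant k_{p_2}'(x_{p_2})$ for every relevant pair then makes each term of $\nabla f(\mathbf{x}) \cdot (\mathbf{y} - \mathbf{x})$ nonnegative, so the variational inequality holds and convexity upgrades it to global optimality.

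I expect the main obstacle to be the bookkeeping in the ``if'' direction: one must ensure that the decomposition of a feasible move uses only transfers originating from paths with strictly positive flow, so that exactly the inequalities available by hypothesis are invoked. This is in fact automatic --- a move that pushes flow off a path with $x_p = 0$ would be infeasible --- but it should be stated explicitly. A secondary point is the one-sided nature of the derivative when $x_{p_2} = 0$: since such a path is only ever increased, the relevant quantity is the right derivative $k_{p_2}'(0)$, and the argument goes through unchanged. Finally, since the statement is the classical Beckmann-type characterisation (cf.\ the cited references) restated in path variables, the proof is essentially an adaptation of the known edge-flow argument to our path formulation.
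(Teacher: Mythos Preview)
The paper does not supply its own proof of Lemma~\ref{lemma:sor}; it is quoted as a classical result from \cite{beckmann1956studies, dafermos1969traffic, roughgarden2002bad} and used without argument. Your variational-inequality derivation is correct and is precisely the standard route taken in those references: convexity of the objective on a product of simplices, the identification $\partial f/\partial x_p = k_p'(x_p)$, and the observation that the feasible-direction cone is generated by the two-path swaps $e_{p_2}-e_{p_1}$ with $x_{p_1}>0$. So there is nothing to compare against in the paper itself, but your proposal matches the cited literature.

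One small point worth tightening: when you write $\partial f/\partial x_p = k_p'(x_p)$, remember that in the paper's path formulation the edge costs depend on the \emph{edge} flows $x_{ij} = \sum_{q\ni(i,j)} x_q$, not on $x_p$ alone. The derivative is still $\sum_{(i,j)\in p} k_{ij}'(x_{ij})$, which the paper denotes $k_p'(x_p)$ by a slight abuse of notation, so the conclusion is unchanged --- but make that chain-rule step explicit so the reader sees why cross-path edge sharing does not spoil the computation.
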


The solution to the SO routing problem thus characterises the scenario where all users choose their paths in order to minimise the total system travel time or, in other words, the social cost.
\subsection{User equilibrium (UE) routing}
\label{sec:ue}
By solving the SO problem we optimise for total system travel time. On the other hand, some agents could experience longer travel times than others on the same trip for the system's benefit. Due to this characteristic, we define SO as \textit{unfair}.

A fair assignment is the User Equilibrium (UE). It induces a Nash equilibrium in which no user can improve their travel time (cost) by choosing a different route. It is also known as Wardrop equilibrium~\cite{wardrop1952road}. This equilibrium emerges when we consider the routing problem as a non-cooperative game, where each agent, seen as a fraction of the flow, is routed on the fastest route available according to the current traffic conditions. In this sense, agents routed according to UE can be defined as self-interested.

\begin{lemma}[\cite{roughgarden2002bad}]
A feasible flow $\mathbf{x}$ for instance $(G,\mathcal{M},\mathbf{c})$ is at UE if and only if for every $m \in \mathcal{M}$ and any pair $p_1,p_2 \in P^m$ with $x_{p_1} > 0$, we have $c_{p_1}(x_{p_1}) \leqslant c_{p_2}(x_{p_2})$.
\label{lemma:uer}
\end{lemma}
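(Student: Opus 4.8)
The plan is to prove both implications directly from the Nash-equilibrium definition of UE given above, viewing an agent as an infinitesimal unit of flow assigned to some $s^m$--$t^m$ path that may unilaterally reroute.

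$(\Leftarrow)$ Suppose the stated inequality holds. Fix a trip $m$ and set $L^m := \min_{p \in P^m} c_p(x_p)$. The hypothesis says every path of $P^m$ carrying positive flow has cost exactly $L^m$, while every other path of $P^m$ has cost at least $L^m$. An agent on a used path therefore already travels at the least cost available for its trip, so no reroute can strictly decrease its travel time; hence $\mathbf{x}$ is at UE.

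$(\Rightarrow)$ Argue by contradiction. Suppose $\mathbf{x}$ is at UE but there are $m$ and $p_1, p_2 \in P^m$ with $x_{p_1} > 0$ and $c_{p_1}(x_{p_1}) > c_{p_2}(x_{p_2})$. Reroute an amount $\varepsilon > 0$ of flow from $p_1$ to $p_2$. Flows on edges common to $p_1$ and $p_2$ are unaffected; flows on $p_1 \setminus p_2$ decrease (so those edge costs do not increase, by monotonicity of the $c_{ij}$) and flows on $p_2 \setminus p_1$ increase by $\varepsilon$. By continuity of the cost functions, the cost of $p_2$ after the perturbation tends to $c_{p_2}(x_{p_2})$ as $\varepsilon \to 0^+$, while the cost of $p_1$ remains at most $c_{p_1}(x_{p_1})$. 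Since $c_{p_1}(x_{p_1}) > c_{p_2}(x_{p_2})$ strictly, for all small enough $\varepsilon$ the rerouted mass strictly lowers its travel time, contradicting that at UE no agent can improve. Hence the inequality holds for every such pair.

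I expect the $(\Rightarrow)$ direction to be the delicate one: the perturbation argument must be stated so that it stays feasible and well defined when $p_1$ and $p_2$ overlap, and the strict improvement must be shown to survive for small $\varepsilon$, which is exactly where continuity of $\mathbf{c}$ — guaranteed by convexity on the feasible domain — enters. (In the strict nonatomic reading the step is immediate: a single infinitesimal agent switching leaves every edge cost unchanged, so it simply moves to the cheaper path.) A fully variational alternative would identify UE flows with the minimizers of the Beckmann potential $\sum_{(i,j) \in \mathcal{E}} \int_0^{x_{ij}} c_{ij}(s)\, ds$ over the feasible flow polytope and extract the Wardrop condition from its first-order optimality conditions, paralleling the treatment of Lemma~\ref{lemma:sor}; this replaces the elementary perturbation with a convexity/KKT argument.
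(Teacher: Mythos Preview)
The paper does not prove this lemma; it is quoted verbatim as a result of Roughgarden and carries no argument in the text, so there is no in-paper proof to compare your proposal against. Your two-direction argument is the standard one and is correct as written; the parenthetical nonatomic reading you give for $(\Rightarrow)$ is in fact the cleanest form, since in the Wardrop model a single agent has measure zero and the $\varepsilon$-perturbation with continuity is only needed if one insists on positive-mass deviations. The variational alternative you sketch, via first-order conditions for the Beckmann potential $\sum_{(i,j)} \int_0^{x_{ij}} c_{ij}(s)\,ds$, is precisely how the cited reference derives the result and mirrors the paper's treatment of Lemma~\ref{lemma:sor} for SO, so either route would be acceptable here.
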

In particular, if a flow is at UE, all the paths of trip $m$ for which $c_p > 0$ have the same cost $C_m$.

\begin{lemma}[\cite{roughgarden2002bad}]
If a flow is at UE for instance $(G,\mathcal{M}, \mathbf{c})$, then the total cost is computed as

\begin{equation}
    \sum_{p \in P} x_p c_p(x_p) = \sum_{m \in \mathcal{M}} d^m C_m.
    \label{eq:user_time}
\end{equation}
\end{lemma}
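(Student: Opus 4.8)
The plan is to reduce the left-hand side of~\eqref{eq:user_time} to the right-hand side by a direct computation that exploits two ingredients already available: the flow-conservation constraint~\eqref{eq:conservation_sor} (equivalently~\eqref{eq:conservation3}) and the equal-cost characterisation of UE flows from Lemma~\ref{lemma:uer} together with the remark immediately following it, namely that at a UE every $s^m$--$t^m$ path carrying positive flow has the same cost $C_m$.

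First I would rewrite the global cost as a sum over trips. Since $P = \bigcup_{m \in \mathcal{M}} P^m$, and assuming the origin--destination pairs of distinct trips are distinct so that the sets $P^m$ partition $P$,
\[
\sum_{p \in P} x_p c_p(x_p) \;=\; \sum_{m \in \mathcal{M}} \sum_{p \in P^m} x_p c_p(x_p).
\]
Next, fix a trip $m$ and discard the paths carrying no flow: if $x_p = 0$ then $x_p c_p(x_p) = 0$, so only paths with $x_p > 0$ contribute to the inner sum, and on each of those $c_p(x_p) = C_m$ by Lemma~\ref{lemma:uer}. Therefore
\[
\sum_{p \in P^m} x_p c_p(x_p) \;=\; C_m \sum_{p \in P^m :\, x_p > 0} x_p \;=\; C_m \sum_{p \in P^m} x_p \;=\; C_m\, d^m,
\]
where re-inserting the zero-flow paths leaves the sum unchanged and the final equality is exactly the conservation constraint~\eqref{eq:conservation_sor}. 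Summing over $m \in \mathcal{M}$ then gives~\eqref{eq:user_time}.

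There is no genuine obstacle in this argument; the only point deserving a little care is the bookkeeping when several trips share the same origin and destination, in which case one phrases the decomposition in terms of commodity-indexed path flows $x_p^m$ rather than a single variable $x_p$ per path, but the computation is otherwise identical. The essential content is simply that at a UE all used routes of a commodity carry equal cost, so the contribution of that commodity telescopes into demand times common cost.
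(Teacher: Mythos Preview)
Your argument is correct and is exactly the standard derivation: partition $P$ by commodity, drop zero-flow paths, use the equal-cost property from Lemma~\ref{lemma:uer} to factor out $C_m$, and apply the conservation constraint to recover $d^m$. The paper itself does not supply a proof of this lemma at all---it is stated with a citation to~\cite{roughgarden2002bad} and left at that---so there is no in-paper argument to compare against; your write-up would serve as a self-contained justification, and the caveat you flag about commodity-indexed path flows when trips share endpoints is the only bookkeeping subtlety, which you handle appropriately.
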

This means that all agents on the same trip will arrive at the same time to their destination under UE. The convexity guarantees of SO are valid also in this case and, in particular, we have strong existence guarantees for the equilibrium.

\begin{lemma}[\cite{roughgarden2002bad}]
Given an instance $(G,\mathcal{M}, \mathbf{c})$, if there exists a feasible flow assignment, there always exists a feasible flow assignment at UE. Furthermore, given two assignments at UE, they will have the same total cost (travel time) (shown in Eq.~\eqref{eq:user_time}).
\end{lemma}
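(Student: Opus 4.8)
The plan is to realise the UE flows as the minimisers of a single convex potential — the classical Beckmann--McGuire--Winsten transformation — and then to read off existence from compactness and essential uniqueness of the total cost from convexity. Throughout I work in the path formulation; note that $P$ is finite because the graph is finite and paths are simple, and that positivity of the costs rules out negative cycles.

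First I would define, on the feasible set
$$
\mathcal{F} = \Big\{\,\mathbf{x}\ :\ x_p \geqslant 0\ \ \forall p\in P,\ \ \textstyle\sum_{p\in P^m} x_p = d^m\ \ \forall m\in\mathcal{M}\,\Big\},
\qquad x_{ij} = \textstyle\sum_{p\in P:(i,j)\in p} x_p,
$$
the Beckmann potential $\Phi(\mathbf{x}) = \sum_{(i,j)\in\mathcal{E}} \int_{0}^{x_{ij}} c_{ij}(t)\,dt$. Since each $c_{ij}$ is continuous, $\Phi$ is $C^1$; since each $c_{ij}$ is monotone nondecreasing and $x_{ij}$ is linear in $\mathbf{x}$, $\Phi$ is convex on $\mathcal{F}$. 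If a feasible flow exists then $\mathcal{F}\neq\emptyset$, and $\mathcal{F}$ is closed and bounded (each $x_p\leqslant d^m$), hence compact, so by the Weierstrass theorem $\Phi$ attains its minimum on $\mathcal{F}$.

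Next I would identify the minimisers of $\Phi$ over $\mathcal{F}$ with the UE flows. Given a feasible $\mathbf{x}$ and a pair $p_1,p_2\in P^m$ with $x_{p_1}>0$, shifting a small amount $\varepsilon>0$ of flow from $p_1$ to $p_2$ keeps us in $\mathcal{F}$, and, using $\partial\Phi/\partial x_{ij}=c_{ij}(x_{ij})$ and the cancellation of edges common to both paths, the one-sided directional derivative of $\Phi$ at $\mathbf{x}$ along this direction equals $c_{p_2}(x_{p_2})-c_{p_1}(x_{p_1})$. At a minimiser this is $\geqslant 0$, which is exactly the Wardrop condition of Lemma~\ref{lemma:uer}; conversely, if $\mathbf{x}$ satisfies that condition then every feasible direction has nonnegative directional derivative, and convexity of $\Phi$ promotes this to global optimality. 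Hence a UE flow exists. For the uniqueness of the total cost, let $\mathbf{x},\mathbf{y}\in\mathcal{F}$ both be at UE, hence both global minimisers of the convex $\Phi$; then $\Phi$ is constant on the segment $[\mathbf{x},\mathbf{y}]$, and writing $\Phi$ there as a sum of convex functions of the affine maps $\lambda\mapsto x_{ij}(\lambda)$, a constant sum of convex terms forces each term to be affine, so each $t\mapsto\int_0^t c_{ij}$ is affine between $x_{ij}$ and $y_{ij}$, i.e. $c_{ij}(x_{ij})=c_{ij}(y_{ij})$ on every edge. Therefore every path has the same cost under $\mathbf{x}$ and $\mathbf{y}$, the common trip costs $C_m$ coincide, and by Eq.~\eqref{eq:user_time} the total travel times are equal. (When the $c_{ij}$ are strictly increasing, as for the Greenshields and BPR functions, $\Phi$ is strictly convex in the edge flows and the minimising edge-flow vector is itself unique.)

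The step I expect to be the main obstacle is this last one: because the latency functions are only assumed monotone, not strictly monotone, uniqueness of the flow can fail and one must argue uniqueness at the level of the edge \emph{costs} — via the "constant-sum-of-convex-terms is termwise affine" argument — and then transfer it to the total cost through Eq.~\eqref{eq:user_time}. A secondary technical point is the correct handling of the first-order conditions on the boundary of $\mathcal{F}$ (paths carrying zero flow), which is why I would run the variational argument directly in the path formulation with one-sided derivatives rather than invoking smooth KKT stationarity.
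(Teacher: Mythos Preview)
The paper does not supply its own proof of this lemma: it is quoted verbatim as a known result from \cite{roughgarden2002bad} (and, behind that, Beckmann--McGuire--Winsten). Your argument is exactly the classical convex-potential proof that those references use---existence from compactness of $\mathcal{F}$ and continuity of $\Phi$, the variational characterisation of UE as the first-order conditions for $\Phi$, and cost-uniqueness from the fact that a constant sum of convex one-variable terms forces each term to be affine---so there is nothing to compare; you have reconstructed the intended proof.
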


Lastly, let us define the UE routing problem as follows:

\begin{subequations}
\label{eq:ue}
\begin{equation}
    \min_{\mathbf{x}}{\sum_{(i,j) \in \mathcal{E}} \int_{0}^{x_{ij}}c_{ij}(\tau) d\tau}
    \label{eq:uer_objective}
\end{equation}
 s.t.
\begin{equation}
x_p \geqslant 0 \quad \forall p \in P
\label{eq:non_zero_uer}
\end{equation}
\vspace{2pt}
\begin{equation}
    \sum_{p \in P^m} x_p = d^m \quad
    \forall m \in \mathcal{M}
    \label{eq:conservation_uer}
\end{equation}
\begin{equation}
    x_{ij} = \sum_{p \in P : (i,j) \in p}x_p \quad \forall (i,j) \in \mathcal{E},
    \label{eq:paths_uer}
\end{equation}
\end{subequations}
where the constraints are the same as in the SO case, with the addition of~\eqref{eq:paths_uer}, which states that the flow on an edge is equal to the sum of the flows of the paths passing through it. This addition is necessary as in the objective~\eqref{eq:uer_objective} we sum over edges instead of paths.
\subsection{Differences between SO and UE routing}

 The SO and UE problems are different. In particular, when considering the UE formulation, we trade-off total travel time for fairness and self-interest.

However, by looking at Lemmas~\ref{lemma:sor} and~\ref{lemma:uer}, we can notice some similarities.
The striking similarity between the characterizations of optimal solutions to the SO and UE problems was noticed early on~\cite{beckmann1956studies}, and provides an interpretation of a system optimal flow as a flow at UE with respect to a different set of edge cost functions~\cite{roughgarden2002bad}. To make this relationship precise, denote the marginal cost of increasing flow on edge $(i,j)$ by $c_{ij}^*(x_{ij}) = k_{ij}'(x_{ij}) = (x_{ij} c_{ij}(x_{ij}))' = c_{ij}(x_{ij}) + x_{ij} c_{ij}'(x_{ij})$. Lemmas~\ref{lemma:sor} and~\ref{lemma:uer} yield the following corollary.

\begin{corollary}[\cite{roughgarden2002bad}]
Let $(G, \mathcal{M}, \mathbf{c})$ be an instance and $\mathbf{c}^*$ be the vector of marginal cost functions defined as above. Then, a flow feasible for $(G, \mathcal{M}, \mathbf{c})$ is SO if and only if it is at UE for the instance $(G,\mathcal{M}, \mathbf{c}^*)$.
\end{corollary}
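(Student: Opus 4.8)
The plan is to obtain the corollary by directly comparing the variational characterizations of the two solution concepts, namely Lemma~\ref{lemma:sor} for SO and Lemma~\ref{lemma:uer} for UE, after first observing that the two instances share the same feasible set.

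First I would check that ``$\mathbf{x}$ feasible for $(G,\mathcal{M},\mathbf{c})$'' and ``$\mathbf{x}$ feasible for $(G,\mathcal{M},\mathbf{c}^*)$'' are the same statement. Indeed, in the SO formulation~\eqref{eq:so} and in the UE formulation~\eqref{eq:ue} the only genuine constraints on $\mathbf{x}$ are non-negativity (\eqref{eq:non_zero_sor}, resp.\ \eqref{eq:non_zero_uer}) and per-trip flow conservation (\eqref{eq:conservation_sor}, resp.\ \eqref{eq:conservation_uer}); the extra relation~\eqref{eq:paths_uer} in the UE formulation is merely the definition of edge flow in terms of path flow and carries no additional restriction. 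Since none of these constraints involves the cost functions, feasibility is identical for the two instances, and it remains only to match the optimality/equilibrium conditions.

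Next comes the key identity. By definition $c_{ij}^*(x_{ij}) = k_{ij}'(x_{ij})$ with $k_{ij}(x) = x\,c_{ij}(x)$, and the paper defines $c_p^*(x_p) = \sum_{(i,j)\in p} c_{ij}^*(x_p)$ and $k_p'(x_p) = \sum_{(i,j)\in p} k_{ij}'(x_p)$. Hence, for every path $p$ and every scalar $x_p$,
$$ c_p^*(x_p) = \sum_{(i,j)\in p} c_{ij}^*(x_p) = \sum_{(i,j)\in p} k_{ij}'(x_p) = k_p'(x_p). $$
Feeding this into the lemmas finishes the argument: Lemma~\ref{lemma:sor} states that $\mathbf{x}$ is SO for $(G,\mathcal{M},\mathbf{c})$ iff for every $m \in \mathcal{M}$ and every $p_1,p_2 \in P^m$ with $x_{p_1} > 0$ one has $k_{p_1}'(x_{p_1}) \leqslant k_{p_2}'(x_{p_2})$; Lemma~\ref{lemma:uer}, applied to the instance $(G,\mathcal{M},\mathbf{c}^*)$, states that $\mathbf{x}$ is at UE for that instance iff for every $m$ and every $p_1,p_2 \in P^m$ with $x_{p_1} > 0$ one has $c_{p_1}^*(x_{p_1}) \leqslant c_{p_2}^*(x_{p_2})$. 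By the displayed identity these two systems of inequalities coincide, so the two properties are equivalent, which is exactly the claim.

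The only point requiring care — the ``main obstacle'', although it is a mild one — is justifying that Lemma~\ref{lemma:uer} may legitimately be invoked for the instance $(G,\mathcal{M},\mathbf{c}^*)$, i.e.\ that $\mathbf{c}^*$ belongs to the class of admissible (continuous, non-decreasing) latency functions for which that characterization was derived. This follows from the standing assumptions of Section~\ref{sec:flow-dependant_cost}: each $c_{ij}$ is convex and non-decreasing, so $k_{ij}(x) = x\,c_{ij}(x)$ is convex on the feasible range, whence $c_{ij}^* = k_{ij}'$ is non-decreasing and continuous there; thus $\mathbf{c}^*$ satisfies the hypotheses of Lemma~\ref{lemma:uer} and the equivalence goes through verbatim. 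Once this observation is in place, the proof is a one-line substitution.
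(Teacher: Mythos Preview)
Your proposal is correct and is exactly the argument the paper has in mind: the paper does not give a separate proof but simply states that ``Lemmas~\ref{lemma:sor} and~\ref{lemma:uer} yield the following corollary'' after recording the identity $c_{ij}^* = k_{ij}'$. Your write-up merely spells out that one-line substitution (and the routine check that $\mathbf{c}^*$ is admissible), so it matches the paper's approach.
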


Both routing strategies represent convex nonlinear optimisation problems, which can be solved by traditional gradient-based techniques such as the Frank-Wolfe algorithm~\cite{frank1956algorithm, leblanc1975efficient}. The cost trade-off in total travel time between SO and UE is known as ``The price of anarchy"~\cite{roughgarden2005selfish, roughgarden2003price}. Note that, in a network with constant costs, the optimal solution of SO and UE is the same as that of the MC problem (assuming we ignore the capacity constraint~\eqref{eq:capacity2}).

\begin{example}
Let us consider the example shown in Fig.~\ref{fig:so_vs_ue}, first introduced by Pigou~\cite{pigou2017economics} to illustrate the difference between SO and UE routing. In this simple scenario we consider only one trip, from $s$ to $t$, with a demand of $1$ unit of flow. In the network there are only two paths, represented as edges. The travel time of such paths is indicated by $c_1(x_{1})$ and $c_2(x_{2})$, which are functions of the flow. The flow routed on the upper road is indicated as $x_1$, while the flow on the lower road is $x_2$.

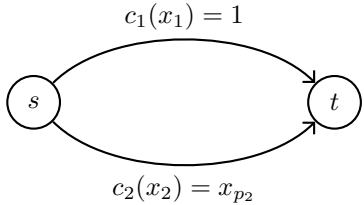
\begin{figure}[ht!]
\centering

  \begin{tikzpicture}[node distance={40mm}, thick, 
    node/.style = {draw, circle, minimum size=7mm},
    edge_label/.style = {midway, above, sloped, color=black},
    edge/.style = {->, color=black}
    ] 
    
    \begin{scope}[>={Straight Barb[black]}]
    
    \node[node] (1) {$s$}; 
    \node[node] (2) [right of=1] {$t$};
    \draw[edge] (1) to  [out=45,in=135,looseness=0.8] node[edge_label] {$c_1(x_{1}) = 1$} (2);
    \draw[edge] (1) to  [out=-45,in=-135,looseness=0.8] node[midway, below, sloped, color=black] {$c_2(x_{2}) = x_{p_2}$} (2);
    \end{scope}

\end{tikzpicture} 
\caption{Pigou's example to illustrate the difference between UE and SO. It presents one origin $s$ and one destination $t$ and a flow demand of 1 that has to be distributed along paths 1 and 2, with travel time costs respectively $c_1(x_1)$ and $c_2(x_2)$. While SO distributes the flow equally among the two paths, UE routes all the flow on the lower path to guarantee equal costs on the two paths.}
\label{fig:so_vs_ue}
\end{figure}

If the users are routed according to UE, the flow will distribute such that all the users experience the same travel time. This leads to a UE where all the flow is routed on the lower road ($x_1 = 0, x_2=1$), yielding $c_1(x_1) = c_2(x_2) = 1$. The total system travel time in this case is $x_1 c_1(x_1) + x_2 c_2(x_2) = 1$. On the other hand, if we optimise for the SO, the flow distributes equally on the two roads ($x_1 = 0.5, x_2=0.5$), yielding a total travel time of $x_1 c_1(x_1) + x_2 c_2(x_2) = 0.75$. This example is summarised in Table~\ref{tab:UE_vs_SO}.

\begin{table}[H]
\caption{UE and SO routing results on the network in Fig.~\ref{fig:so_vs_ue}.}
\centering
\begin{tabular}{r c c c p{.5\linewidth}}
\toprule
 Routing & $x_1$ & $x_2$  & Total travel time \\
 &&& $x_1 c_1(x_1) + x_2c_2(x_2)$\\
\midrule
User Equilibrium & 0 & 1 & 1\\
System Optimal & 0.5 & 0.5 & 0.75 \\
\bottomrule
\end{tabular}

\label{tab:UE_vs_SO}
\end{table}
\end{example}

\section{PATH ADDITIONS}
\label{sec:path_additions}
Now that we have introduced the routing problems of interest, we turn our attention to the transport network on which vehicles are routed, represented as a graph. We are interested in developing a formulation for the incremental construction of the graph which enables us to formally reason about the properties of path additions. We introduce two concepts: the \textit{trip spanning tree} and the \textit{trip path graph}. These two components allow us to simply model path additions in a modular framework, which is also able to represent the example introduced by Braess'~\cite{braess2005paradox}.

We consider graphs of the type introduced in Section~\ref{sec:problem_formulation}. All the graphs we will consider, will be subgraphs of a \textit{template graph} $G_\mathcal{T}$. This template graph is used to limit the space of possible path additions. In this work we consider the template graph to be a square grid lattice with bidirectional edges, but any template choice is equally valid.

As we are interested in analysing the properties of path additions to a transport network, we have to start by defining an initial feasible graph which we aim to extend. 
Therefore, we start by introducing the concept of \textit{trip spanning tree}.

\begin{definition}[\textbf{Trip spanning tree}]
    Given a template graph $G_{\mathcal{T}}$ and a set of trips $\mathcal{M} = \{(s^m,t^m,d^m)\}_m$, a trip spanning tree is a directed graph $G_{\mathcal{I}}=(\mathcal{V}_{\mathcal{I}},\mathcal{E}_{\mathcal{I}})$, subgraph of $G_{\mathcal{T}}$, with the following four properties:
    \begin{enumerate}
        \item $s^m,t^m \in \mathcal{V}_\mathcal{I} ,\; \forall m \in \mathcal{M}$
        \item $|P^m| = 1 ,\; \forall m \in \mathcal{M}$
        \item $ 
        \forall n \in \mathcal{V}_\mathcal{I},\;
        \exists m \in \mathcal{M},\; p \in P^m: n \in p$
        \item $ \sum_{m \in \mathcal{M}}\sum_{p \in P^m:(i,j) \in p} d^m \leqslant u_{ij} \quad \forall (i,j) \in \mathcal{E}_\mathcal{I}$.
    \end{enumerate}

\end{definition}

 Property (1) states that a trip spanning tree must contain the source and destination of every trip. Property (2) states that for each origin-destination pair there exists exactly one path connecting them in the trip spanning tree. Property (3) states that every node in the trip spanning tree must be part of a path connecting one trip's origin to its destination. Property (4) states that the capacity of the edges in the trip spanning tree must be sufficient to guarantee a feasible solution to the MC problem. These properties are designed to make the trip spanning tree an initial feasible graph for the routing problem. Note that the trip spanning tree is not unique with respect to the problem instance.

Let us now define the possible path additions through the concept of \textit{trip path graph}.

\begin{definition}[\textbf{Trip path graph}]
     Given a template graph $G_{\mathcal{T}}$ and a set of trips $\mathcal{M} = \{(s^m,t^m,d^m)\}_m$, we denote a trip path graph, with respect to trip $m \in \mathcal{M}$, as a graph $G_{m_x} = (\mathcal{V}_{m_x},\mathcal{E}_{m_x})$, subgraph of $G_{\mathcal{T}}$, with the following properties:
    \begin{enumerate}
        \item $s^m,t^m \in \mathcal{V}_{m_x}$
        \item $|P^m| = 1$
        \item $ 
        \forall n \in \mathcal{V}_{m_x},\; p \in P^m: n \in p$.
    \end{enumerate}
\end{definition}
 
 A trip path graph is thus a path graph which contains only a path from one trip's source to its destination. The number of possible trip path graphs for trip $m$ is dependent on the template graph $G_{\mathcal{T}}$ and is indexed by $x$. Property (1) states that the origin and destination of such trip must belong to the trip path graph. Property (2) states that the trip path graph must contain only one path between these two. Property (3) states that all the nodes belonging to the trip path graph must belong to such path. Therefore, a trip path graph is a trip spanning tree with respect to only one trip, without the guarantees given by Property (4) of a trip spanning tree. Note that the Braess paradox (Fig.~\ref{fig:braess}) can be formulated according to this framework. Here a trip path graph $(s,v,w,t)$ is added to a trip spanning tree that has already undergone one path addition.

 \begin{figure}[ht!]
\centering
\begin{subfigure}{0.5\columnwidth}
\centering
\scalebox{0.5}{%
  \begin{tikzpicture}[node distance={40mm}, thick, 
    node/.style = {draw, circle, minimum size=0.75cm},
    edge/.style = {<->}
    ] 
    
    \node[node] (9)  [fill=green!80]{$s^1$};
    \node[node] (10) [right=1.6cm of 9]{};
    \node[node] (2) [above=1.3cm of 10]{};
    \node[node] (3) [right=1.6cm of 2]{};
    \node[node] (4) [right=1.6cm of 3, fill=green!80]{$t^1$};
    \node[node] (6) [above=1.3cm of 2,fill=SkyBlue!80]{$s^2$};
    \node[node] (11) [below=1.3cm of 3]{};

    \node[node] (14) [below=1.3cm of 11, fill=SkyBlue!80]{$t^2$};

    \begin{scope}[>={Straight Barb[black]}, every edge/.style={draw, color=black,very thick}]
    \draw[->] (9) edge (10);
    \draw[->] (2) edge (3);
    \draw[->] (3) edge (4);
    \draw[->] (6) edge (2);
    \draw[<->] (2) edge (10);
    \draw[<-] (14) edge (11);
    \draw[->] (10) edge (11);
    \end{scope}
    \begin{scope}[>={Straight Barb[orange]}, every edge/.style={draw, color=orange,very thick}]
    ;

    \end{scope}
    
\end{tikzpicture} 
}
\caption{Trip spanning tree $G_\mathcal{I}$}
\label{fig:minimal}
\end{subfigure}%
\begin{subfigure}{0.5\columnwidth}
\centering
\scalebox{0.5}{%
  \begin{tikzpicture}[node distance={40mm}, thick, 
    node/.style = {draw, circle, minimum size=0.75cm},
    edge/.style = {<->}
    ] 
    
    \node[node] (9)  [fill=green!80]{$s^1$};
    \node[node] (10) [right=1.6cm of 9]{};
    \node[node] (2) [draw=none, above=1.3cm of 10]{};
    \node[node] (3) [right=1.6cm of 2]{};
    \node[node] (4) [right=1.6cm of 3, fill=green!80]{$t^1$};
    \node[node] (6) [draw=none, above=1.3cm of 2]{};
    \node[node] (11) [below=1.3cm of 3]{};

    \node[node] (14) [draw=none, below=1.3cm of 11]{};

    \begin{scope}[>={Straight Barb[orange]}, every edge/.style={draw, color=orange,very thick}]
    \draw[->] (9) edge (10);
    \draw[->] (3) edge (4);
    \draw[->] (10) edge (11);
    \draw[<-] (3) edge (11);
    \end{scope}
    
\end{tikzpicture} 
}
\caption{Trip path graph $G_{1_1}$}
\label{fig:addition}
\end{subfigure}
\begin{subfigure}{\columnwidth}
\centering
\vspace{5pt}
\scalebox{0.8}{%
  \begin{tikzpicture}[node distance={40mm}, thick, 
    node/.style = {draw, circle, minimum size=0.75cm},
    edge/.style = {<->}
    ] 
    
    \node[node] (9)  [fill=green!80]{$s^1$};
    \node[node] (10) [right=1.6cm of 9]{};
    \node[node] (2) [above=1.3cm of 10]{};
    \node[node] (3) [right=1.6cm of 2]{};
    \node[node] (4) [right=1.6cm of 3, fill=green!80]{$t^1$};
    \node[node] (6) [above=1.3cm of 2,fill=SkyBlue!80]{$s^2$};
    \node[node] (11) [below=1.3cm of 3]{};

    \node[node] (14) [below=1.3cm of 11, fill=SkyBlue!80]{$t^2$};

    \begin{scope}[>={Straight Barb[black]}, every edge/.style={draw, color=black,very thick}]
    \draw[->] (9) edge (10);
    \draw[->] (2) edge (3);
    \draw[->] (3) edge (4);
    \draw[->] (6) edge (2);
    \draw[<->] (2) edge (10);
    \draw[<-] (14) edge (11);
    \draw[->] (10) edge (11);
    \end{scope}
    \begin{scope}[>={Straight Barb[orange]}, every edge/.style={draw, color=orange,very thick}]
    ;
    \draw[<-] (3) edge (11);

    \end{scope}
    
\end{tikzpicture} 
}
\caption{Graph union $G_{\mathcal{I} \oplus 1_1}$}
 \label{fig:result}
\end{subfigure}
\caption{A simple transport network with two trips to illustrate graph unions. Here we have a trip spanning tree \subref{fig:minimal} and a trip path graph for trip 1 \subref{fig:addition} that are being unified in \subref{fig:result}.}
\label{fig:example}
\end{figure}
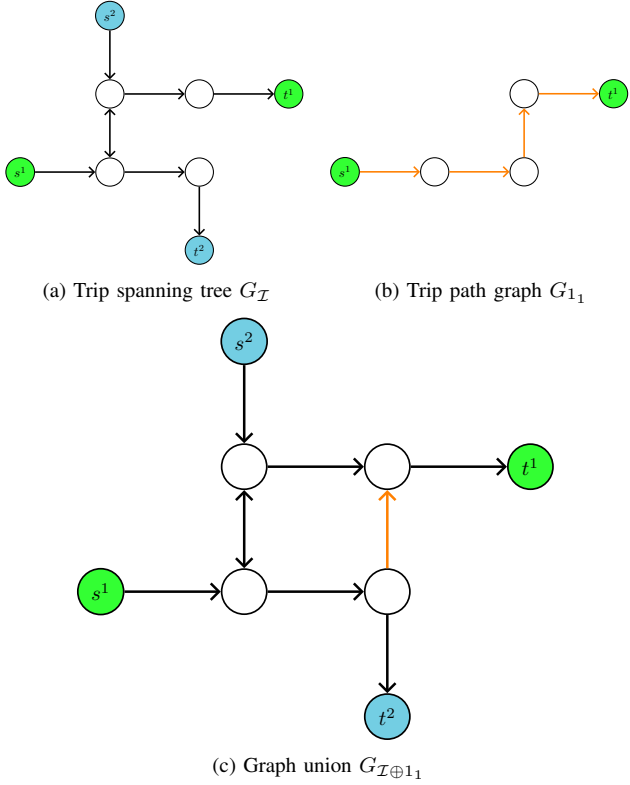

The trip spanning tree and the trip path graphs allow us to model the subsequent addition of paths to an initial graph. We describe this through graph unions. We denote the graph obtained by adding a trip path graph $G_{m_x} = (\mathcal{V}_{m_x},\mathcal{E}_{m_x})$ to the trip spanning tree $G_{\mathcal{I}}=(\mathcal{V}_{\mathcal{I}},\mathcal{E}_{\mathcal{I}})$ as $G_{\mathcal{I} \oplus {m_x}}=(\mathcal{V}_{\mathcal{I}} \cup \mathcal{V}_{m_x},\mathcal{E}_{\mathcal{I}} \cup \mathcal{E}_{m_x})$. Note that, by this addition, the trip path graph can introduce multiple new nodes and edges not previously contained in the trip spanning tree and thus also multiple new paths for each trip. The number of new paths introduced is limited only by the template graph $G_\mathcal{T}$ (remembering that $G_\mathcal{I} \subseteq G_\mathcal{T}$ and $G_{m_x} \subseteq G_\mathcal{T}$).
 
 We illustrate the concept of trip spanning tree and trip path graph addition with an example, shown in Fig.~\ref{fig:example}. In this simple example we have only two trips $\mathcal{M} = \{(s^1,t^1,d^1),(s^2,t^2,d^2)\}$. We instantiate a possible trip spanning tree $G_{\mathcal{I}}$ (Fig.~\ref{fig:minimal}) and a possible trip path graph $G_{1_1}$ (Fig.~\ref{fig:addition}) for trip 1. The resulting graph union $G_{\mathcal{I} \oplus 1_1}$ is shown in Fig.~\ref{fig:result}. Despite $G_{1_1}$ has four edges, only one new edge is added to the graph, since the other edges were already present. Furthermore, note that, in this case, $G_{1_1}$ adds one path for trip 1 and no paths for trip 2. This is not always the case as adding a trip path graph for trip $m$ could introduce new paths for other trips as well.

 Let us introduce some further notation related to graph unions. We consider a graph $G_\mathcal{A}=(\mathcal{V}_\mathcal{A},\mathcal{E}_\mathcal{A})$, obtained after $n \geqslant 0$ trip path graph additions to a trip spanning tree, and a trip path graph $G_{m_x}=(\mathcal{V}_{m_x},\mathcal{E}_{m_x})$. We denote with $P_\mathcal{A}^m$ the set of all paths for trip $m$ in $G_\mathcal{A}$, with $P_{{m_x}|\mathcal{A}}^m$ the set of all paths added by $G_{m_x}$ to $G_\mathcal{A}$ for trip $m$, and with $P_{{m_x} \oplus \mathcal{A}}^m $ the set of all paths for trip $m$ in $G_{{m_x} \oplus \mathcal{A}}$. Note that $P_{{m_x}|\mathcal{A}}^m$ is defined as $P_{{m_x}|\mathcal{A}}^m = P_{{m_x} \oplus \mathcal{A}}^m \setminus P_{\mathcal{A}}^m$. From this definition we obtain that $P_{{m_x} \oplus \mathcal{A}}^m = P_{{m_x}|\mathcal{A}}^m \cup P_{\mathcal{A}}^m$, with $P_{{m_x}|\mathcal{A}}^m \cap P_{\mathcal{A}}^m = \varnothing$. We also define $P_\mathcal{A} = \bigcup_{m \in \mathcal{M}} P_\mathcal{A}^m$.
 
 Let us look at another example for this notation. Suppose that there exists only one trip $\mathcal{M} = \{(s^1,t^1,d^1)\}$. In Figure~\ref{fig:example2}, we can see a trip spanning tree $G_\mathcal{I}$ for trip 1 (shown in black) and a trip path graph $G_{1_1}$ for the same trip (shown in orange). Therefore, we will have $|P_\mathcal{I}^1| = |P_{1_1}^1| = 1$. These two sets will contain the black and the orange path respectively. The set $P_{1_1 \oplus \mathcal{I}}^1 = P_{1_1|\mathcal{I}}^1 \cup P_{\mathcal{I}}^1$  will contain four paths: three of which come from the set $P_{1_1|\mathcal{I}}^1$, which contains the path in $P_{1_1}^1$ ($(s^1,w,r,v,g,h,t^1)$) and two new induced paths  ($(s^1,v,g,h,t^1), (s^1,w,r,v,t^1)$), and the last being the original path present in $P_\mathcal{I}^1$ ($(s^1,v,t^1)$).
 
\begin{figure}[ht]
\centering
\begin{tikzpicture}[node distance={40mm}, thick, 
    node/.style = {draw, circle, minimum size=0.87cm},
    edge/.style = {<->}
    ] 
    
    \node[node] (1) [fill=green!80]{$s^1$}; 
    \node[node] (2) [right=1.6cm of 1]{$v$};
    \node[node] (3) [right=1.6cm of 2, fill=green!80]{$t^1$};
    \node[node] (4) [above=1.3cm of 1]{$w$}; 
    \node[node] (5) [above=1.3cm of 2]{$r$};
    \node[node] (6) [below=1.3cm of 2]{$g$};
    \node[node] (7) [below=1.3cm of 3]{$h$};

    \begin{scope}[>={Straight Barb[black]}, every edge/.style={draw, color=black,very thick}]
    \draw[->] (1) edge (2);
    \draw[->] (2) edge (3);
    \end{scope}
    \begin{scope}[>={Straight Barb[orange]}, every edge/.style={draw, color=orange,very thick}]
    ;
    \draw[->] (1) edge (4);
    \draw[->] (4) edge (5);
    \draw[->] (5) edge (2);
    \draw[<-] (6) edge (2);
    \draw[->] (6) edge (7);
    \draw[->] (7) edge (3);
    \end{scope}
    
\end{tikzpicture}
\caption{Illustration of the addition of a trip path graph. In this figure we have a trip spanning tree $G_\mathcal{I}$ for trip 1 (shown in black) to which we add a trip path graph $G_{1_1}$ for the same trip (shown in orange). We see how, by only adding one trip path graph to the trip spanning tree, we obtain four total paths for trip 1, of which: one is the original one present in the trip spanning tree ($(s^1,v,t^1)$), one is present in $G_{1_1}$ ($(s^1,w,r,v,g,h,t^1)$), and the remaining two ($(s^1,v,g,h,t^1), (s^1,w,r,v,t^1)$) are created from the addition.}
\label{fig:example2}
\end{figure}
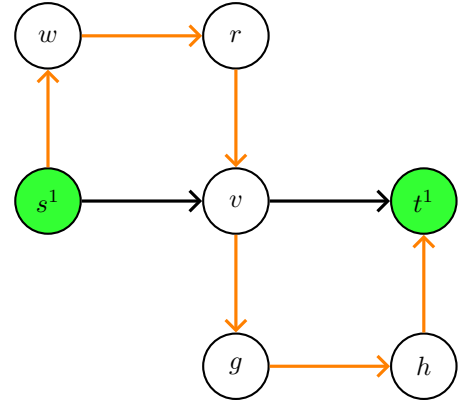

\section{PROPERTIES OF PATH ADDITIONS}
\label{sec:properties}
Having defined the problem of path additions, we turn our attention to its properties and, in particular to monotonicity and supermodularity.

Let us start by introducing the functions we will analyse. Given a graph template $G_\mathcal{T}$ with constant edge costs, a set of trips $\mathcal{M}$, and a trip spanning tree $G_\mathcal{I}$, we denote as $\mathcal{G}$ the finite set of all possible trip path graphs. Note that $\mathcal{G} = \bigcup_{m \in \mathcal{M}}{\mathcal{G}_m}$ is the union of the disjoint sets of trip path graphs for each trip, where $\mathcal{G}_m = \{G_{m_1}, G_{m_2}, \dots, G_{m_x}, \dots\}$ is the set of all trip path graphs for trip $m$. We then define $\Lambda_{\mathrm{MC}}: 2^\mathcal{\mathcal{G}} \mapsto \R_{>0}$ which takes as input a subset of $\mathcal{G}$, adds it to $G_{\mathcal{I}}$, and outputs the cost of the MC routing on the resulting graph. This function can be written as follows:

\begin{equation*}
   \Lambda_{\mathrm{MC}}(\mathcal{A}) = {\sum_{p \in P_{\mathcal{A} \oplus \mathcal{I}}} x^{\mathrm{MC}}_p c_p},
\end{equation*} 
 where $\mathbf{x}^{\mathrm{MC}}$ is the solution to the MC problem on graph $G_{\mathcal{A} \oplus \mathcal{I}}$. $\mathcal{A} \subseteq \mathcal{G}$ is a set of trip path graphs, equally identified by the union of all the graphs in it, which we denote as $G_{\mathcal{A}} = \bigcup_{G_x \in \mathcal{A}} G_x$, containing the set of paths $P_\mathcal{A}$. The trip spanning tree $G_\mathcal{I}$ guarantees that when $\Lambda_{\mathrm{MC}}$ is computed on the empty set ($\Lambda_{\mathrm{MC}}(\varnothing)$) there is still a feasible flow for each trip and each trip's source is connected to its destination.
 
 Similarly, we define $\Lambda_{\mathrm{SO}}$ and $\Lambda_{\mathrm{UE}}$, which operate on graphs with flow-dependant cost functions. We call $\mathcal{H}$ the set of all possible trip path graphs drawn from a template with flow-dependant costs. $\Lambda_{\mathrm{SO}}, \Lambda_{\mathrm{UE}}: 2^\mathcal{\mathcal{H}} \mapsto \R_{>0}$ are defined as follows:
 
\begin{subequations}
\begin{equation*}
     \Lambda_{\mathrm{SO}}(\mathcal{A}) = {\sum_{p \in P_{\mathcal{A} \oplus \mathcal{I}}} x^{\mathrm{SO}}_p c_p(x^{\mathrm{SO}}_p)}
     \label{eq:lambda_o}
\end{equation*}
\begin{equation*}
     \Lambda_{\mathrm{UE}}(\mathcal{A}) = {\sum_{p \in P_{\mathcal{A} \oplus \mathcal{I}}} x^{\mathrm{UE}}_p c_p(x^{\mathrm{UE}}_p)},
     \label{eq:lambda_u}
\end{equation*}
\end{subequations}
where $\mathbf{x}^{\mathrm{SO}}$ and $\mathbf{x}^{\mathrm{UE}}$ are the solutions to the SO and UE flow problems on graph $G_{\mathcal{A} \oplus \mathcal{I}}$.

It is important to understand that $\Lambda_{\mathrm{MC}}$ could be a particular case of $\Lambda_{\mathrm{SO}}$ with constant costs. The only difference is that $\Lambda_{\mathrm{MC}}$ is subject to the capacity constraint \eqref{eq:capacity2} of the MC problem. While this constraint is ensured to be satisfied for the trip spanning tree (Property 4), we do not enforce the same for path additions.

We can use the functions just introduced to define NDPs via path additions. The problems will consist in minimizing one of the three functions introduced (depending on the routing formulation), subject to cardinality constraints, construction budgets, or general constraints formulated as matroids. The lower level of these NDPs is encapsulated in our objective functions.

To understand whether we can design near-optimal approximation algorithms for these NDPs, we are now interested in analysing the properties of the functions just introduced.

\subsection{Monotonicity}
Let us analyse the monotonicity of $\Lambda_{\mathrm{MC}}, \Lambda_{\mathrm{SO}}, \Lambda_{\mathrm{UE}}$.

\begin{theorem}[\cite{braess2005paradox}]
    $\Lambda_{\mathrm{UE}}$ is \textbf{not} a monotonic non-increasing set function of the set $\mathcal{H}$.
\end{theorem}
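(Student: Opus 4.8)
The definition of monotonic non-increasing requires $\Lambda_{\mathrm{UE}}(\mathcal{B}) \leqslant \Lambda_{\mathrm{UE}}(\mathcal{A})$ for \emph{every} $\mathcal{A} \subseteq \mathcal{B} \subseteq \mathcal{H}$, so to disprove it the plan is to exhibit a single pair $\mathcal{A} \subseteq \mathcal{B}$ on which the inequality is reversed. I would obtain such a pair by casting Braess' network (Fig.~\ref{fig:braess}) into the trip-path-graph framework of Section~\ref{sec:path_additions} and then reading off the two objective values from the computation already recalled in Section~\ref{sec:braess}.

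Concretely, I would take the single trip $\mathcal{M} = \{(s,t,6)\}$ and choose a template graph $G_\mathcal{T}$ that contains Braess' five-edge network on nodes $\{s,v,w,t\}$ with the affine latencies of Fig.~\ref{fig:braess} (admissible convex costs, being a special case of the BPR family in Eq.~\eqref{eq:brp}) and with edge capacities large enough to route the demand $6$. Let $G_\mathcal{I}$ be the single directed path $s \to v \to t$; it satisfies the four trip-spanning-tree axioms (it contains $s$ and $t$, has a unique $s$--$t$ path, every node lies on that path, and the capacities dominate the demand). Let $G_{1_1}$ be the trip path graph $s \to w \to t$ and $G_{1_2}$ the trip path graph $s \to v \to w \to t$, and set $\mathcal{A} = \{G_{1_1}\}$ and $\mathcal{B} = \{G_{1_1},G_{1_2}\}$, so $\mathcal{A} \subseteq \mathcal{B} \subseteq \mathcal{H}$. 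Then $G_{\mathcal{A} \oplus \mathcal{I}}$ is exactly Braess' network \emph{without} the edge $(v,w)$, and unioning in $G_{1_2}$ adds the edge $(v,w)$ and hence the single new path $s \to v \to w \to t$, so $G_{\mathcal{B} \oplus \mathcal{I}}$ is the full Braess network.

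It then remains to evaluate the objective on the two graphs via the UE characterisation. On $G_{\mathcal{A} \oplus \mathcal{I}}$, the flow that splits the demand $3$--$3$ between $s \to v \to t$ and $s \to w \to t$ equalises the two path costs at $C_1 = 83$, hence is at UE by Lemma~\ref{lemma:uer}, so Eq.~\eqref{eq:user_time} gives $\Lambda_{\mathrm{UE}}(\mathcal{A}) = 6 \cdot 83 = 498$. On $G_{\mathcal{B} \oplus \mathcal{I}}$, the flow putting $2$ units on each of the three $s$--$t$ paths equalises all used-path costs at $C_1 = 92$ (and no unused path is cheaper), hence is at UE, so $\Lambda_{\mathrm{UE}}(\mathcal{B}) = 6 \cdot 92 = 552$. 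Since $\Lambda_{\mathrm{UE}}(\mathcal{B}) = 552 > 498 = \Lambda_{\mathrm{UE}}(\mathcal{A})$ while $\mathcal{A} \subseteq \mathcal{B}$, the function $\Lambda_{\mathrm{UE}}$ is not monotonic non-increasing.

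The only step that is not purely mechanical is the embedding: checking that this choice of $G_\mathcal{I}$, $G_{1_1}$, $G_{1_2}$ together with the template and capacities simultaneously satisfies the trip-spanning-tree and trip-path-graph definitions, and that the union $G_{\mathcal{B} \oplus \mathcal{I}}$ reproduces Braess' instance exactly --- in particular that $G_{1_2}$ introduces precisely the one path through $(v,w)$ and no spurious extra paths. Everything after that is the textbook Braess computation of Section~\ref{sec:braess}, here repackaged through Lemma~\ref{lemma:uer} and Eq.~\eqref{eq:user_time}; I expect no obstacle beyond this bookkeeping.
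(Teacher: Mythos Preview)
Your proposal is correct and follows exactly the approach the paper takes: the paper's proof is a one-line appeal to Braess' counterexample from Section~\ref{sec:braess}. You go further than the paper by explicitly spelling out the embedding of Braess' network into the trip-spanning-tree / trip-path-graph formalism (an embedding the paper only remarks on informally at the end of the paragraph defining trip path graphs), but the underlying idea and the numerical values are identical.
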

\begin{proof}
The proof is provided by the Braess' paradox~\cite{braess2005paradox}. His counterexample is discussed in Section~\ref{sec:braess}.
\end{proof}
\begin{theorem}
    $\Lambda_{\mathrm{SO}}$ is a monotonic non-increasing set function of the set $\mathcal{H}$.
\label{theo:so_non_increasing}
\end{theorem}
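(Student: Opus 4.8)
The plan is to use that $\Lambda_{\mathrm{SO}}(\mathcal{A})$ is, by definition, the optimal value of the minimisation problem~\eqref{eq:so} posed on the graph $G_{\mathcal{A}\oplus\mathcal{I}}$, and that enlarging the set of added trip path graphs from $\mathcal{A}$ to a superset $\mathcal{B}$ only enlarges the feasible region of this problem, without changing the objective value of any flow that was already feasible. Fixing $\mathcal{A}\subseteq\mathcal{B}\subseteq\mathcal{H}$, it therefore suffices to exhibit a flow on $G_{\mathcal{B}\oplus\mathcal{I}}$ that is feasible for the SO problem and whose total cost equals $\Lambda_{\mathrm{SO}}(\mathcal{A})$; since $\Lambda_{\mathrm{SO}}(\mathcal{B})$ is the minimum cost over all feasible flows on $G_{\mathcal{B}\oplus\mathcal{I}}$, this immediately gives $\Lambda_{\mathrm{SO}}(\mathcal{B})\leqslant\Lambda_{\mathrm{SO}}(\mathcal{A})$.

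First I would record graph monotonicity: $\mathcal{A}\subseteq\mathcal{B}$ implies $G_{\mathcal{A}}=\bigcup_{G_x\in\mathcal{A}}G_x\subseteq\bigcup_{G_x\in\mathcal{B}}G_x=G_{\mathcal{B}}$, hence $G_{\mathcal{A}\oplus\mathcal{I}}$ is a subgraph of $G_{\mathcal{B}\oplus\mathcal{I}}$, so $\mathcal{V}_{\mathcal{A}\oplus\mathcal{I}}\subseteq\mathcal{V}_{\mathcal{B}\oplus\mathcal{I}}$, $\mathcal{E}_{\mathcal{A}\oplus\mathcal{I}}\subseteq\mathcal{E}_{\mathcal{B}\oplus\mathcal{I}}$, and $P^m_{\mathcal{A}\oplus\mathcal{I}}\subseteq P^m_{\mathcal{B}\oplus\mathcal{I}}$ for every trip $m$. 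Then I would take an optimal SO flow $\mathbf{x}^{\mathrm{SO}}$ on $G_{\mathcal{A}\oplus\mathcal{I}}$ (it exists: the feasible set is non-empty thanks to the trip spanning tree $G_\mathcal{I}$, compact, and the objective convex and continuous), and extend it to a flow $\tilde{\mathbf{x}}$ on $G_{\mathcal{B}\oplus\mathcal{I}}$ by putting zero flow on every path of $G_{\mathcal{B}\oplus\mathcal{I}}$ that is not already a path of $G_{\mathcal{A}\oplus\mathcal{I}}$. Since every edge of $\mathcal{E}_{\mathcal{B}\oplus\mathcal{I}}\setminus\mathcal{E}_{\mathcal{A}\oplus\mathcal{I}}$ lies only on such new paths, it receives zero flow, and every edge of $\mathcal{E}_{\mathcal{A}\oplus\mathcal{I}}$ keeps exactly the flow it carried under $\mathbf{x}^{\mathrm{SO}}$. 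I would then check that $\tilde{\mathbf{x}}$ is feasible: non-negativity~\eqref{eq:non_zero_sor} is trivial, and flow conservation~\eqref{eq:conservation_sor} holds at the old nodes because their incident edge flows are unchanged, and at the new nodes because their incident edges carry zero flow and, being neither sources nor destinations (all of which lie in $\mathcal{V}_\mathcal{I}$), they have zero demand for every trip.

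It remains to compare costs, which I would do in the edge form $\sum_{(i,j)}x_{ij}c_{ij}(x_{ij})$ of the SO objective. Under $\tilde{\mathbf{x}}$, each new edge contributes $0\cdot c_{ij}(0)=0$ -- using that the latency functions of Section~\ref{sec:flow-dependant_cost} are finite at zero flow -- while each old edge contributes exactly what it did under $\mathbf{x}^{\mathrm{SO}}$; hence the cost of $\tilde{\mathbf{x}}$ is precisely $\Lambda_{\mathrm{SO}}(\mathcal{A})$, and the argument closes.

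I do not expect a substantial obstacle: the only places requiring care are checking that the zero-extension is both feasible (new nodes must have zero demand) and cost-preserving (cost functions finite at zero), and keeping the argument at the level of flows. The latter is what separates this result from $\Lambda_{\mathrm{UE}}$: the UE flow minimises the Beckmann potential~\eqref{eq:uer_objective}, which is itself non-increasing under path additions by exactly the same enlargement argument, but $\Lambda_{\mathrm{UE}}$ reports the total travel time $\sum_p x_p c_p(x_p)$ evaluated \emph{at} that minimiser, a quantity that can strictly increase (Braess' example). For SO the function minimised and the function reported coincide, so monotonicity of the optimal value transfers directly to $\Lambda_{\mathrm{SO}}$.
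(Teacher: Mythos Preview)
Your proposal is correct and takes essentially the same approach as the paper: both arguments observe that enlarging $\mathcal{A}$ to $\mathcal{B}$ enlarges the feasible region of the SO minimisation, and that the optimal flow on the smaller graph, extended by zero on the new paths $P_{\mathcal{B}\oplus\mathcal{I}}\setminus P_{\mathcal{A}\oplus\mathcal{I}}$, remains feasible with unchanged cost. The paper phrases this as ``the minimum subject to the constraint $x_p=0$ on new paths equals the minimum on the smaller graph, and removing a constraint can only lower the minimum,'' while you phrase it as ``exhibit a feasible flow on the larger graph with cost $\Lambda_{\mathrm{SO}}(\mathcal{A})$''; these are the same idea, and your version is slightly more explicit about the feasibility and cost-preservation checks.
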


\begin{proof}
The proof can be found in~\ref{sec:proof_so_non_inreasing}.
\end{proof}

\begin{theorem}
    $\Lambda_{\mathrm{MC}}$ is a monotonic non-increasing set function of the set $\mathcal{G}$.
\label{theo:mc_non_increasing}
\end{theorem}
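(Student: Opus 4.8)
The plan is to prove the statement straight from the definition of a monotonic non-increasing set function, exploiting that the MC problem is a \emph{minimisation} and that adding trip path graphs can only enlarge the underlying graph. Fix $\mathcal{A} \subseteq \mathcal{B} \subseteq \mathcal{G}$; I must show $\Lambda_{\mathrm{MC}}(\mathcal{B}) \leqslant \Lambda_{\mathrm{MC}}(\mathcal{A})$. The central observation is that $\mathcal{A} \subseteq \mathcal{B}$ implies $G_{\mathcal{A}} = \bigcup_{G_x \in \mathcal{A}} G_x$ is a subgraph of $G_{\mathcal{B}} = \bigcup_{G_x \in \mathcal{B}} G_x$, hence $G_{\mathcal{A}\oplus\mathcal{I}}$ is a subgraph of $G_{\mathcal{B}\oplus\mathcal{I}}$: the two graphs share all nodes and edges of $G_{\mathcal{A}\oplus\mathcal{I}}$, and $G_{\mathcal{B}\oplus\mathcal{I}}$ may carry some additional nodes and edges. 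Consequently the MC problem on $G_{\mathcal{B}\oplus\mathcal{I}}$ is a relaxation of the one on $G_{\mathcal{A}\oplus\mathcal{I}}$, and its optimum can only be smaller.

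Concretely, I would take an optimal flow $\mathbf{x}^{\mathrm{MC}}$ achieving $\Lambda_{\mathrm{MC}}(\mathcal{A})$ on $G_{\mathcal{A}\oplus\mathcal{I}}$ and extend it to a flow $\hat{\mathbf{x}}$ on $G_{\mathcal{B}\oplus\mathcal{I}}$ by setting $\hat{x}_{ij}^m = x_{ij}^{m,\mathrm{MC}}$ for every edge $(i,j) \in \mathcal{E}_{\mathcal{A}\oplus\mathcal{I}}$ and $\hat{x}_{ij}^m = 0$ for every edge $(i,j) \in \mathcal{E}_{\mathcal{B}\oplus\mathcal{I}} \setminus \mathcal{E}_{\mathcal{A}\oplus\mathcal{I}}$ and every $m \in \mathcal{M}$. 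I would then verify, one constraint at a time, that $\hat{\mathbf{x}}$ is feasible for the MC problem on $G_{\mathcal{B}\oplus\mathcal{I}}$: non-negativity \eqref{eq:non_zero2} is immediate; the capacity constraint \eqref{eq:capacity2} holds because flows on old edges are unchanged and new edges carry zero flow $\leqslant u_{ij}$; and flow conservation \eqref{eq:conservation2} holds at every old node (its incident flows are unchanged) and at every newly introduced node because such a node cannot be a trip source or sink — Property (1) of the trip spanning tree puts all sources and sinks in $G_\mathcal{I} \subseteq G_{\mathcal{A}\oplus\mathcal{I}}$ — so its demand $d_i^m$ is $0$ and all edges incident to it carry zero flow. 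Finally the objective is preserved, since the new edges contribute nothing: $\sum_{m}\sum_{(i,j)\in\mathcal{E}_{\mathcal{B}\oplus\mathcal{I}}} \hat{x}_{ij}^m c_{ij} = \sum_{m}\sum_{(i,j)\in\mathcal{E}_{\mathcal{A}\oplus\mathcal{I}}} x_{ij}^{m,\mathrm{MC}} c_{ij} = \Lambda_{\mathrm{MC}}(\mathcal{A})$.

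Putting the pieces together, $\hat{\mathbf{x}}$ is a feasible (not necessarily optimal) solution of the MC problem on $G_{\mathcal{B}\oplus\mathcal{I}}$ of cost $\Lambda_{\mathrm{MC}}(\mathcal{A})$, and $\Lambda_{\mathrm{MC}}(\mathcal{B})$ is by definition the minimum cost of that problem, so $\Lambda_{\mathrm{MC}}(\mathcal{B}) \leqslant \Lambda_{\mathrm{MC}}(\mathcal{A})$, which is precisely non-increasing monotonicity. I would also note that this is the constant-cost specialisation of the argument behind Theorem~\ref{theo:so_non_increasing}, the only extra bookkeeping being the capacity constraint, which as shown is never binding for the zero-extended flow. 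I do not expect any genuinely hard step here; the one thing to be careful about is the feasibility check at the new nodes — making explicit that trip path graphs never introduce new sources or sinks — together with the (easy) remark that the MC problem on $G_{\mathcal{B}\oplus\mathcal{I}}$ is always feasible, since it contains $G_\mathcal{I}$, which satisfies Property (4).
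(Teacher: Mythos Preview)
Your proof is correct and follows essentially the same zero-extension idea as the paper: the paper works in the path formulation, setting $x_p = 0$ on the new paths $P_{\mathcal{Y}|\mathcal{A}\oplus\mathcal{I}}$ and appealing to the constrained-versus-unconstrained minimum inequality (invoking Property~(4) of the trip spanning tree for feasibility of the capacity constraint), whereas you carry out the equivalent argument in the edge formulation and verify each constraint explicitly. Both routes amount to the same observation that the optimal flow on $G_{\mathcal{A}\oplus\mathcal{I}}$ is a feasible point of the MC problem on $G_{\mathcal{B}\oplus\mathcal{I}}$ with identical cost.
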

\begin{proof}
The proof can be found in~\ref{sec:proof_mc_non_increasing}.
\end{proof}

The intuition behind the proofs of monotonicity is simple: when the objective is to minimise the social cost, like in MC and SO, the addition of a path can either provide a decrease in total cost or be ignored\footnote{Inconvenient additions can be ignored thanks to the fact that the trip spanning tree always grants a feasible starting solution.} by the agents. This is not the case for UE, as shown by Braess.

\subsection{Supermodularity}
The property of supermodularity of path additions can be informally explained through the following statement: `adding a path to a transport network always grants greater or equal benefit to users than adding it to a superset of that network'. Although this property may appear intuitive at first, we show that it does not hold in general, even in the case where all the agents are cooperating to minimise the social cost. As in Braess' work, we present a counterexample to support our proofs, shown in Fig.~\ref{fig:counterexample_new} and~\ref{fig:counterexample_steps}.

\begin{figure}[h!]
\centering
  \begin{tikzpicture}[node distance={40mm}, thick, 
    node/.style = {draw, circle, minimum size=0.87cm},
    edge/.style = {<->}
    ] 
    
    \node[node] (1) [fill=green!80]{$s^1$}; 
    \node[node] (2) [right=1.6cm of 1]{};
    \node[node] (3) [right=1.6cm of 2]{};
    \node[node] (4) [right=1.6cm of 3, fill=green!80]{$t^1$};
    \node[node] (5) [above=1.3cm of 1]{}; 
    \node[node] (6) [above=1.3cm of 2]{};
    \node[node] (7) [above=1.3cm of 3]{};
    \node[node] (8) [above=1.3cm of 4]{};
    \node[node] (9) [below=1.3cm of 1]{}; 
    \node[node] (10) [below=1.3cm of 2]{};
    \node[node] (11) [below=1.3cm of 3]{};
    \node[node] (12) [below=1.3cm of 4]{};
    \node[node] (13) [below=1.3cm of 10]{}; 
    \node[node] (14) [below=1.3cm of 11]{};

    \begin{scope}[>={Straight Barb[black]}, every edge/.style={draw, color=black,very thick}]
    \draw[->] (1) edge (2);
    \draw[->] (2) edge (3);
    \draw[->] (3) edge (4);
    \end{scope}
    \begin{scope}[>={Straight Barb[orange]}, every edge/.style={draw, color=orange,very thick}]
    \draw[->] (1) edge (5);
    \draw[->] (5) edge (6);
    \draw[->] (6) edge (2);
    \draw[->] (2) edge (10);
    \draw[->] (10) edge (13);
    \draw[->] (13) edge (14);
    \draw[->] (14) edge (11);
    \draw[->] (11) edge (12);
    \draw[->] (12) edge (4);
    \end{scope}
    \begin{scope}[>={Straight Barb[blue]}, every edge/.style={draw, color=blue,very thick}]
    \draw[->] (1) edge (9);
    \draw[->] (9) edge (10);
    \draw[->] (10) edge (11);
    \draw[->] (11) edge (3);
    \draw[->] (3) edge (7);
    \draw[->] (7) edge (8);
    \draw[->] (8) edge (4);

    \end{scope}
\end{tikzpicture}
\caption{Network used as counterexample to the supermodularity of path additions to transport networks (Theorems~\ref{theo:mc_supermod},~\ref{theo:so_ue_supermod}). In this example we have a trip spanning tree (shown in black) and two trip path graphs (shown in blue and orange). Fig.\ref{fig:counterexample_steps} explains this counterexample step by step.}
\label{fig:counterexample_new}
\end{figure}
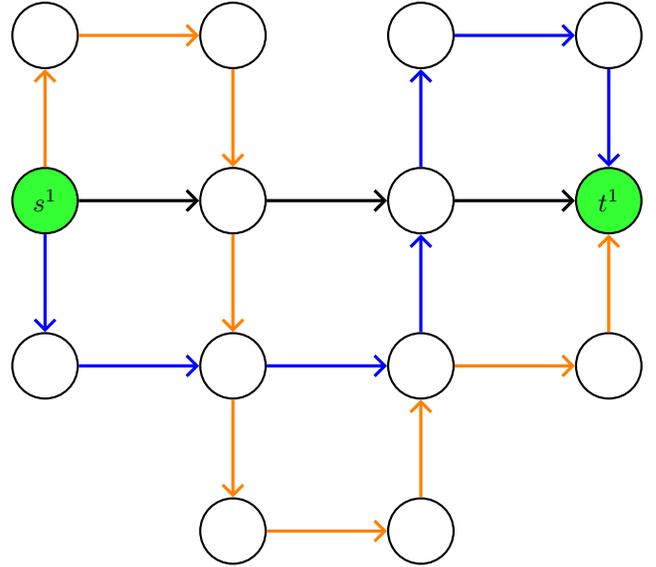

If one can prove that a function $\Lambda : 2^\mathcal{C} \mapsto \R_{>0}$ is not supermodular on a set $\mathcal{C}' \subseteq \mathcal{C}$, it follows that $\Lambda$ is not supermodular on $\mathcal{C}$. Therefore, to increase the generality of our proofs, we restrict the set of possible path additions by considering only trip path graphs that do not share any edges among them and with the trip spanning tree and have the same cost function and parameters for each edge. The sets $\mathcal{G}$ and $\mathcal{H}$ are thus restricted to
\begin{equation*}
\begin{split}
\mathcal{G}' =  \: & \{  G_x | G_x \in \mathcal{G} \: \wedge \\
& \forall G_y \in \mathcal{G}' \:
( G_y \neq G_x \rightarrow  \mathcal{E}_x \cap \mathcal{E}_y \cap \mathcal{E}_\mathcal{I} = \varnothing \: \wedge \\
& \forall (i,j) \in \mathcal{E}_x, \forall (k,z) \in \mathcal{E}_y \: ( u_{ij} = u_{kz} \wedge c_{ij} = c_{kz}  )  )  \}
\end{split}
\end{equation*}
\begin{equation*}
\begin{split}
\mathcal{H}' & =  \{G_x | G_x \in \mathcal{H} \: \wedge \\
& \forall G_y \in \mathcal{H}' \:
( G_y \neq G_x \rightarrow  \mathcal{E}_x \cap \mathcal{E}_y \cap \mathcal{E}_\mathcal{I} = \varnothing \: \wedge \\
& \forall (i,j) \in \mathcal{E}_x, \forall (k,z) \in \mathcal{E}_y, \forall t \in \R_{\geqslant 0} \: ( c_{ij}(t) = c_{kz}(t)  )  )  \}.
\end{split}
\end{equation*}

\begin{theorem}
    $\Lambda_{\mathrm{MC}}$ is \textbf{not} a supermodular set function of the set $\mathcal{G}'$.
    \label{theo:mc_supermod}
\end{theorem}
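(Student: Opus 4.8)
The plan is to refute supermodularity directly, by exhibiting $\mathcal{A}\subseteq\mathcal{B}\subseteq\mathcal{G}'$ and an element $x\in\mathcal{G}'\setminus\mathcal{B}$ for which the inequality defining a supermodular set function is violated. As noted just before the statement, non-supermodularity on any subset of the ground set implies non-supermodularity on the whole set, so it suffices to work with the two-element sub-collection $\{G_{\mathrm{blue}},G_{\mathrm{orange}}\}\subseteq\mathcal{G}'$ formed by the blue and orange trip path graphs of Fig.~\ref{fig:counterexample_new}, added to the trip spanning tree $G_{\mathcal{I}}$ consisting of the black path from $s^1$ to $t^1$. I take $\mathcal{A}=\varnothing$, $\mathcal{B}=\{G_{\mathrm{blue}}\}$ and $x=G_{\mathrm{orange}}$ (the choice with the two colours swapped is equivalent). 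Note that these two trip path graphs are pairwise edge-disjoint and edge-disjoint from $G_{\mathcal{I}}$ --- they meet $G_{\mathcal{I}}$ and each other only at isolated vertices --- and their edges can all be assigned a common constant cost and capacity, so $\{G_{\mathrm{blue}},G_{\mathrm{orange}}\}$ is a legitimate sub-collection of $\mathcal{G}'$.

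Because the instance has a single commodity and all capacities may be chosen large enough never to bind (Property~(4) of the trip spanning tree remaining satisfiable), the MC optimum routes the entire demand $d^1$ along a shortest path, so that $\Lambda_{\mathrm{MC}}(\mathcal{S})$ equals $d^1$ times the length of a shortest path from $s^1$ to $t^1$ in $G_{\mathcal{S}\oplus\mathcal{I}}$. The next step is to fix edge costs: give every edge of $G_{\mathcal{I}}$ a cost $a$ and every edge of $G_{\mathrm{blue}}\cup G_{\mathrm{orange}}$ a common cost $c$ (as $\mathcal{G}'$ requires), with $a$ and $c$ chosen so that \textbf{(i)} adding $G_{\mathrm{blue}}$ alone, and adding $G_{\mathrm{orange}}$ alone, to $G_{\mathcal{I}}$ does \emph{not} shorten the shortest path from $s^1$ to $t^1$ (each recombination of a single colour with $G_{\mathcal{I}}$ at the one shared interior vertex is too long), while \textbf{(ii)} in the union $G_{\mathcal{I}\oplus\mathrm{blue}\oplus\mathrm{orange}}$ the induced path that follows a prefix of the blue path up to a vertex shared with the orange path and then a suffix of the orange path \emph{is} strictly shorter than the black path. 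Translating these requirements into (weighted) edge counts yields a nonempty interval of admissible values (informally, $c$ a little above $a/2$ but with $5c<3a$); a concrete choice such as $a=20$, $c=11$, $d^1=1$ works, and the step-by-step check is given in Fig.~\ref{fig:counterexample_steps}.

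For such a choice one gets $\Lambda_{\mathrm{MC}}(\varnothing)=\Lambda_{\mathrm{MC}}(\{G_{\mathrm{orange}}\})=\Lambda_{\mathrm{MC}}(\{G_{\mathrm{blue}}\})$ together with $\Lambda_{\mathrm{MC}}(\{G_{\mathrm{blue}},G_{\mathrm{orange}}\})<\Lambda_{\mathrm{MC}}(\{G_{\mathrm{blue}}\})$. Hence $\Lambda_{\mathrm{MC}}(\mathcal{A})-\Lambda_{\mathrm{MC}}(\mathcal{A}\cup\{x\})=0$ while $\Lambda_{\mathrm{MC}}(\mathcal{B})-\Lambda_{\mathrm{MC}}(\mathcal{B}\cup\{x\})>0$, so the supermodularity inequality $\Lambda_{\mathrm{MC}}(\mathcal{A})-\Lambda_{\mathrm{MC}}(\mathcal{A}\cup\{x\})\geqslant\Lambda_{\mathrm{MC}}(\mathcal{B})-\Lambda_{\mathrm{MC}}(\mathcal{B}\cup\{x\})$ fails; thus $\Lambda_{\mathrm{MC}}$ is not supermodular on $\mathcal{G}'$, and therefore not on $\mathcal{G}$. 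The underlying intuition is a failure of diminishing returns: a single path addition touches the spanning-tree route only at an isolated vertex and cannot improve it, whereas two additions share enough structure that splicing the blue prefix onto the orange suffix produces a genuinely faster route --- the benefit of the second addition is unlocked only by the presence of the first.

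The step I expect to be the real work is \textbf{(i)}--\textbf{(ii)}: one has to enumerate \emph{all} induced paths from $s^1$ to $t^1$ in the three intermediate unions $G_{\mathcal{I}\oplus\mathrm{orange}}$, $G_{\mathcal{I}\oplus\mathrm{blue}}$ and $G_{\mathcal{I}\oplus\mathrm{blue}\oplus\mathrm{orange}}$ --- recalling that a trip path graph can create several new paths by recombining with edges already present --- and verify that the shortest length is unchanged in the first two graphs and strictly smaller in the third. Since all added edges carry the same cost, this reduces to comparing weighted edge counts, which is exactly what pins down the admissible range of $a$ and $c$ and shows the two requirements are simultaneously satisfiable.
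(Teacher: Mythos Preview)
Your proposal is correct and follows essentially the same approach as the paper: the same counterexample network of Fig.~\ref{fig:counterexample_new}, the same choice $\mathcal{A}=\varnothing$, $\mathcal{Y}=\{G_{\mathrm{blue}}\}$, $x=G_{\mathrm{orange}}$, and the same reduction of $\Lambda_{\mathrm{MC}}$ to a shortest-path length. The only difference is numerical: the paper takes tree-edge cost $3$ and added-edge cost $1$, so that blue alone \emph{does} shorten the route ($\Lambda_{\mathrm{MC}}(\varnothing)=9$, $\Lambda_{\mathrm{MC}}(\{x\})=9$, $\Lambda_{\mathrm{MC}}(\mathcal{Y})=7$, $\Lambda_{\mathrm{MC}}(\mathcal{Y}\cup\{x\})=5$, giving $0<2$), whereas your choice $a=20$, $c=11$ sits in the narrower band $a/2\le c<3a/5$ where neither single addition helps, giving $0<5$; both violate supermodularity, but be aware that Fig.~\ref{fig:counterexample_steps} records the paper's values rather than yours.
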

\begin{proof}
We present a proof by counterexample. It is show that this is valid also in the case of only one trip. Thus, it is sufficient to show that we can find $\{x\}, \mathcal{Y} \subseteq \mathcal{G}'$ such that
\begin{equation}
    \Lambda_{\mathrm{MC}}(\mathcal{\varnothing}) - \Lambda_{\mathrm{MC}}(\{x\}) < \Lambda_{\mathrm{MC}}(\mathcal{Y}) - \Lambda_{\mathrm{MC}}(\mathcal{Y} \cup \{x\}).
    \label{eq:counterexample}
\end{equation}

Let us consider the graph depicted in Fig.~\ref{fig:counterexample_new}. This transport network is composed by: the trip spanning tree $G_\mathcal{I}$, shown in black, the trip path graph $ \{G_{1_y}\} = \mathcal{Y}$, shown in blue, and the trip path graph $G_{1_x} = x$, shown in orange.

The network has the following parameters:
\begin{itemize}
    \item There is only one trip $\mathcal{M} = \{(s^1,t^1,1)\}$
    \item Every edge has a capacity $\geqslant 1$
    \item Edges in the trip spanning tree have a cost of 3
    \item All other edges have a cost of 1.
\end{itemize}
$\Lambda_{\mathrm{MC}}(\mathcal{\varnothing}) = 9$ since all the agents are forced on the only available path (provided by $G_\mathcal{I}$). $\Lambda_{\mathrm{MC}}(\{x\}) = 9$ since the addition provides three new paths, all with the same cost of 9. $\Lambda_{\mathrm{MC}}(\mathcal{Y}) = 7$ since now the blue path has the best cost and it can sustain all the flow. $\Lambda_{\mathrm{MC}}(\mathcal{Y} \cup \{x\}) = 5$ since the blue and orange paths compose to form a solution with cost 5.

Therefore, Equation~\eqref{eq:counterexample} holds. The counterexample is furtherly illustrated in Fig.~\ref{fig:counterexample_steps}.
\end{proof}
\begin{figure}[ht!]
    \centering
    \includegraphics[width=\columnwidth, trim={1mm 1mm 1mm 1mm},clip]{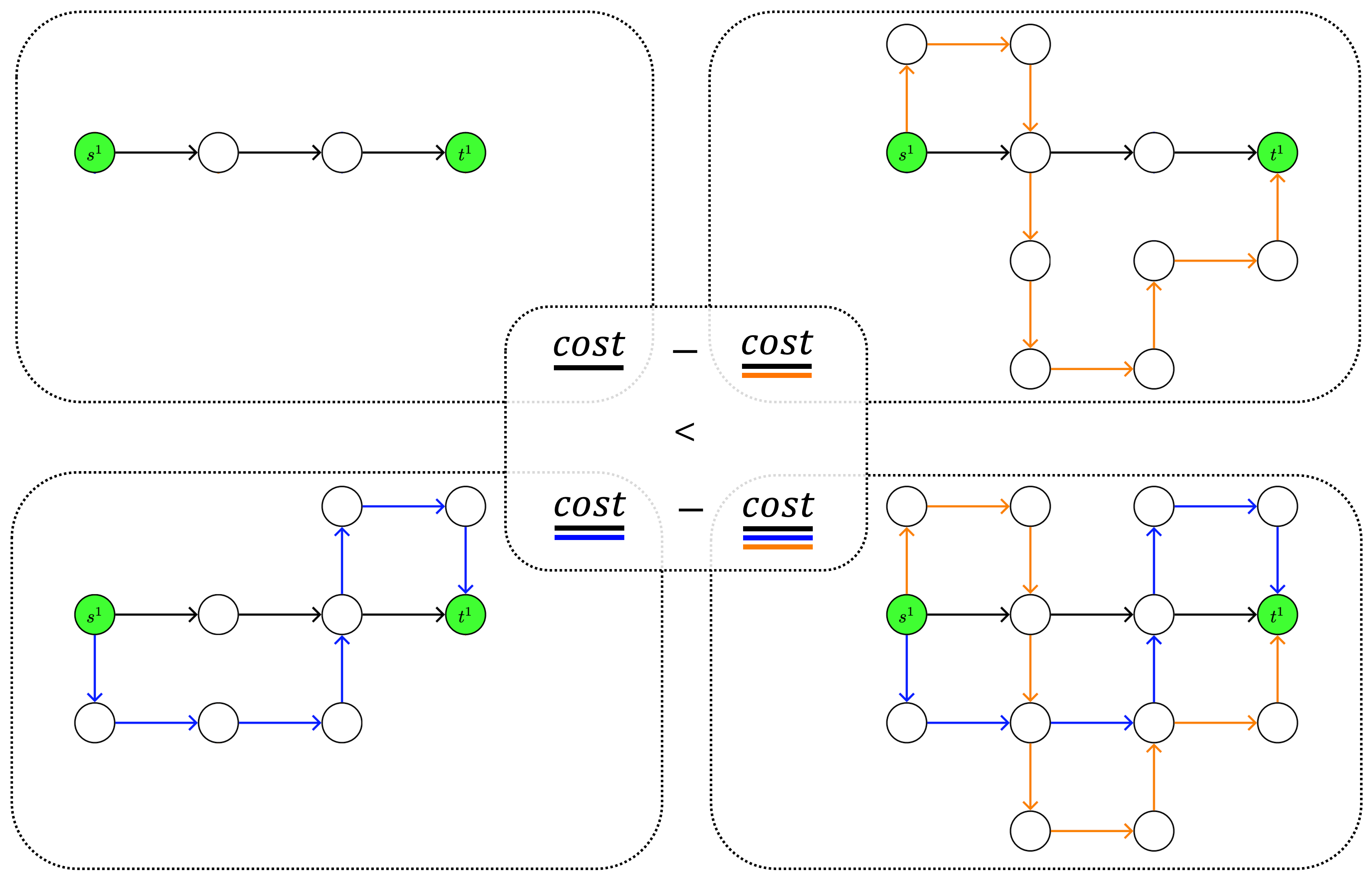}
    \caption{ Visualisation of our counterexample (Theorems~\ref{theo:mc_supermod},~\ref{theo:so_ue_supermod}) illustrating the non-supermodularity of path additions to transport networks. The cost represents the total travel time. In this scenario, the utility (shown as cost difference)  of adding the orange path to the black-and-blue network \textit{is greater} than the utility of adding the orange path just to the black subnetwork. This shows that the utility is not diminishing and, thus, the total travel time of coordinating agents is not supermodular with respect to path additions.}
    \label{fig:counterexample_steps}
\end{figure}
\begin{theorem}
    $\Lambda_{\mathrm{SO}}$ and $\Lambda_{\mathrm{UE}}$ are \textbf{not} supermodular set functions of the set $\mathcal{H}'$.
    \label{theo:so_ue_supermod}
\end{theorem}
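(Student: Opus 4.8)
The plan is to reuse the network of Fig.~\ref{fig:counterexample_new} employed for Theorem~\ref{theo:mc_supermod}, now equipped with flow-dependent latencies, and to exhibit, for both $\Lambda = \Lambda_{\mathrm{SO}}$ and $\Lambda = \Lambda_{\mathrm{UE}}$, sets $\{x\},\mathcal{Y}\subseteq\mathcal{H}'$ that violate the supermodularity inequality, i.e.
$$\Lambda(\varnothing) - \Lambda(\{x\}) < \Lambda(\mathcal{Y}) - \Lambda(\mathcal{Y}\cup\{x\}).$$
I would keep the same single trip $\mathcal{M}=\{(s^1,t^1,1)\}$, the same trip spanning tree $G_\mathcal{I}$ (black), and the same trip path graphs $\mathcal{Y}$ (blue) and $\{x\}$ (orange). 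Since these two path graphs are mutually edge-disjoint, edge-disjoint from $G_\mathcal{I}$, and can be assigned a common edge latency function, they are legitimate members of the restricted set $\mathcal{H}'$, so a violation found with them is a violation on $\mathcal{H}'$ as required.

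First I would assign to every edge of $G_\mathcal{I}$ a latency function equal to $3$ and to every other edge one equal to $1$. Because the demand is a single unit and the SO and UE problems carry no hard capacity constraint, constant latencies make the SO and UE flows coincide with the MC flow (as noted in Section~\ref{sec:routing}), so on this instance $\Lambda_{\mathrm{SO}}=\Lambda_{\mathrm{UE}}=\Lambda_{\mathrm{MC}}$ and the values $\Lambda(\varnothing)=9$, $\Lambda(\{x\})=9$, $\Lambda(\mathcal{Y})=7$, $\Lambda(\mathcal{Y}\cup\{x\})=5$ established in the proof of Theorem~\ref{theo:mc_supermod} transfer verbatim, giving $0<2$. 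If one prefers the strictly increasing latencies of the Greenshields/BPR templates, I would instead take affine functions $c_{ij}(t)=c_{ij}^0+\varepsilon t$ with $c_{ij}^0\in\{1,3\}$ as above, the same $\varepsilon>0$ on all trip-path-graph edges (as $\mathcal{H}'$ demands), and capacities large enough that feasibility is never at stake; since the optimal value of a convex routing problem is continuous in its cost functions, each of the four quantities converges to its constant-cost value $9,9,7,5$ as $\varepsilon\to 0$, so the strict inequality $0<2$ persists for all sufficiently small $\varepsilon$.

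The point requiring care — more a bookkeeping matter than a genuine obstacle — is that with flow-dependent latencies the unit of flow may split across several equal-cost paths, so the ``cost of the cheapest available path'' reasoning used for $\Lambda_{\mathrm{MC}}$ cannot be quoted literally; one instead relies either on the exact identity with $\Lambda_{\mathrm{MC}}$ in the constant-cost instance or on the continuity argument above. In both cases the structural facts needed are inherited from the proof of Theorem~\ref{theo:mc_supermod} and require no re-derivation: $G_\mathcal{I}$ alone offers a single $s^1$--$t^1$ path of cost $9$; adding $\{x\}$ alone only creates further paths of cost $9$; adding $\mathcal{Y}$ alone creates a path of cost $7$; and a path of cost $5$ becomes available only in $G_{\mathcal{Y}\oplus\{x\}\oplus\mathcal{I}}$, where a blue initial segment and an orange final segment compose. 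Hence supermodularity fails for both $\Lambda_{\mathrm{SO}}$ and $\Lambda_{\mathrm{UE}}$ on $\mathcal{H}'$, which also shows (by the remark preceding Theorem~\ref{theo:mc_supermod}) that it fails on $\mathcal{H}$.
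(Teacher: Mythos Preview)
Your argument is correct but takes a genuinely different route from the paper. The paper keeps the same network and trip path graphs but equips \emph{every} edge with the identical Greenshields function ($l_{ij}=1$, $v_{ij}^{\max}=1$, $u_{ij}=10$), sets the demand to $d^1=5$, and computes the four values of $\Lambda_{\mathrm{SO}}$ and $\Lambda_{\mathrm{UE}}$ numerically via Frank--Wolfe (confirmed by hand), obtaining $30-29.28<27.47-24.97$ for SO and $30-30<30-26.66$ for UE; the asymmetry between the paths then comes purely from their \emph{lengths} rather than from different edge parameters. You instead retain the MC cost pattern ($3$ on $G_\mathcal{I}$, $1$ elsewhere), observe that with constant latencies $\Lambda_{\mathrm{SO}}=\Lambda_{\mathrm{UE}}=\Lambda_{\mathrm{MC}}$, and recover the $9,9,7,5$ figures from Theorem~\ref{theo:mc_supermod} verbatim; the affine perturbation plus continuity handles the case where strictly increasing latencies are required. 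Your approach is cleaner in that it needs no solver and makes the reduction to Theorem~\ref{theo:mc_supermod} explicit, whereas the paper's approach exhibits the failure for a ``genuine'' congestion model at nontrivial demand. The only point you should tighten is the continuity step for $\Lambda_{\mathrm{UE}}$: the total travel time at UE is not the Beckmann potential itself, so you need one extra sentence (e.g., uniqueness of UE edge flows for $\varepsilon>0$ plus continuity of the argmin) to conclude that $\Lambda_{\mathrm{UE}}^\varepsilon\to\Lambda_{\mathrm{UE}}^0$.
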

\begin{proof}

The counterexample and the network structure are the same as in the proof of Theorem~\ref{theo:mc_supermod}. The parameters are the following:
\begin{itemize}
    \item There is only one trip $\mathcal{M} = \{(s^1,t^1,5)\}$
    \item Every edge has a capacity $u_{ij} = 10$
    \item Edges have a length $l_{ij} = 1$ and a speed limit $v^{\mathrm{max}}_{ij} = 1$ 
    \item The Greenshields affine velocity function (Sec.~\ref{sec:greenshields}) is used as the edge cost function.
\end{itemize}
To compute the $\Lambda_{\mathrm{SO}}$ and $\Lambda_{\mathrm{UE}}$ values of Equation~\eqref{eq:counterexample}, we use the Frank-Wolfe convex optimisation algorithm\footnote{\url{https://github.com/matteobettini/Traffic-Assignment-Frank-Wolfe-2021}}~\cite{leblanc1975efficient}. We confirm the obtained values by exact hand calculations. When solving Equation~\eqref{eq:counterexample} for $\Lambda_{\mathrm{SO}}$ we now obtain:
\begin{equation*}
\begin{gathered}
     \Lambda_{\mathrm{SO}}(\mathcal{\varnothing}) - \Lambda_{\mathrm{SO}}(\{x\}) < \Lambda_{\mathrm{SO}}(\mathcal{Y}) - \Lambda_{\mathrm{SO}}(\mathcal{Y} \cup \{x\}) \\
    30 - 29.28 < 27.47 - 24.97.
\end{gathered}
\end{equation*}
While for $\Lambda_{\mathrm{UE}}$ we obtain:
\begin{equation*}
\begin{gathered}
\Lambda_{\mathrm{UE}}(\mathcal{\varnothing}) - \Lambda_{\mathrm{UE}}(\{x\}) < \Lambda_{\mathrm{UE}}(\mathcal{Y}) - \Lambda_{\mathrm{UE}}(\mathcal{Y} \cup \{x\}) \\
    30 - 30 < 30 - 26.66.
\end{gathered}
\end{equation*}
Both equations hold.

The same result can be shown with the BPR cost function (Sec.~\ref{sec:bpr}) for various choices of $\alpha_{ij}$ and $\beta_{ij}$.
\end{proof}

While the non-supermodularity of $\Lambda_{\mathrm{UE}}$ was foreseeable given its non-monotonicity, the non-supermodularity of path additions for cooperative routing ($\Lambda_{\mathrm{MC}}$ and $\Lambda_{\mathrm{SO}}$) characterises an important insight for transport network designers.

\subsection{Supermodularity of parallel paths}

Having proven the non-supermodularity of our functions on $\mathcal{G}'$ and $\mathcal{H}'$, we are now concerned with the following question: \textit{is there a subset of the space of possible path additions where we can prove supermodularity?} The answer to this question is positive. We will now illustrate such subsets and the relative supermodularity proofs for MC (Sec.~\ref{sec:MC_parallel}) and for SO and UE (Sec.~\ref{sec:identical_parallel}).

\subsubsection{MC}
\label{sec:MC_parallel}

To prove supermodularity in the MC problem, we now consider the case of parallel trip path graphs. In particular, we also restrict our attention to the single trip case, as, due to the parallel nature of the paths we consider, each trip can be treated independently.

Thus, given a trip $(s,t,d)$, we restrict the space of possible trip path graph additions to 
\begin{multline*}
    \mathcal{G}''  = \{G_x | G_x \in \mathcal{G} \wedge \forall (i,j) \in \mathcal{E}_x (u_{ij} \geqslant d) \: \wedge  \forall G_y \in \mathcal{G}'' \\
    ( G_y \neq G_x \rightarrow  \mathcal{E}_x \cap \mathcal{E}_y \cap \mathcal{E}_\mathcal{I} = \varnothing \: \wedge \mathcal{V}_x \cap \mathcal{V}_y \cap \mathcal{V}_\mathcal{I} = \{s,t\}  ) \},
\end{multline*}
which means that we only consider parallel trip path graphs with different costs which can individually hold all the demand.

The graphs considered in this section are obtained after multiple additions (taken from $\mathcal{G}''$) to a trip spanning tree. They will be represented using two nodes: origin ($s$) and destination ($t$), and a set of directed edges, representing the parallel paths connecting $s$ with $t$. Each path has a cost $c_p = \sum_{(i,j) \in p}{c_{ij}}$ and a capacity $u_p = \min_{(i,j) \in p}{u_{ij}}$. An example of such network can be seen in Figure~\ref{fig:parallel}.

\begin{figure}[ht]
\centering
 \begin{tikzpicture}[node distance={6cm}, thick, 
    node/.style = {draw, circle, minimum size=6mm},
    edge_label/.style = {midway, above, sloped, color=black},
    edge/.style = {->, color=black}
    ] 
    
    \node[node] (1) {$s$}; 
    \node[node] (2) [right of=1] {$t$};
    \draw[edge] (1) to  [out=45,in=135,looseness=0.8] node[edge_label] {$c_\mathcal{I}, u_\mathcal{I}$} (2);
    \begin{scope}[>={Straight Barb[black]}]

    \draw[edge] (1) to  [out=22.5,in=157.5,looseness=0.75] node[edge_label] {$c_1, u_1$} (2);
    \draw[edge] (1) to  [out=0,in=180,looseness=0] node[edge_label] {$c_2, u_2$} node[midway,below=0.2cm,sloped] {$\vdots$} (2) ;
    \draw[edge] (1) to  [out=-45,in=-135,looseness=0.7] node[midway, below, sloped, color=black] {$c_n, u_n$} node[midway,above=0.3cm,sloped] {$\vdots$} (2);
    \end{scope}

\end{tikzpicture}
\caption{An example of a graph in the single trip parallel paths case. Here we can see the trip spanning tree $G_\mathcal{I}$ (which in this scenario is always consisting of just one path with cost $c_\mathcal{I}$ and capacity $u_\mathcal{I}$
) and some trip path graphs numbered from 1 to $n$. The notation on the edges shows cost and capacity of each path, separated by a comma.}
\label{fig:parallel}
\end{figure}
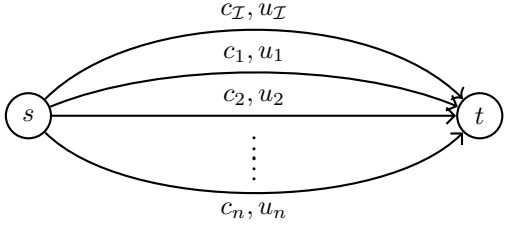

In this scenario, the MC assignment becomes trivial, as the best solution is always to route the whole flow demand on the minimum cost path available in the graph. Hence, we can rewrite $\Lambda_{\mathrm{MC}}$ as follows:
\begin{equation}
    \Lambda_{\mathrm{MC}}(\mathcal{A}) = {\sum_{p \in P_{\mathcal{A} \oplus \mathcal{I}}} x^{\mathrm{MC}}_p c_p} =
    d\min{\left \{c_p |p \in P_{\mathcal{A} \oplus \mathcal{I}} \right\}}.
    \label{eq:parallel_special}
\end{equation}

\begin{lemma}
Given the sets of paths $P_\mathcal{A}$ and $P_\mathcal{B}$ obtained respectively from graphs $G_\mathcal{A}, G_\mathcal{B}$, we have that $P_\mathcal{A}, P_\mathcal{B} \subseteq P_{\mathcal{A} \oplus \mathcal{B}}$.
\label{lemma:sets}
\end{lemma}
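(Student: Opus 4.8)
The plan is to simply unwind the definition of the graph union and observe that the two conditions defining ``being a path'' are preserved when the ambient graph is enlarged. Recall that, by construction, $G_{\mathcal{A} \oplus \mathcal{B}} = (\mathcal{V}_\mathcal{A} \cup \mathcal{V}_\mathcal{B}, \mathcal{E}_\mathcal{A} \cup \mathcal{E}_\mathcal{B})$, so in particular $\mathcal{V}_\mathcal{A} \subseteq \mathcal{V}_{\mathcal{A} \oplus \mathcal{B}}$ and $\mathcal{E}_\mathcal{A} \subseteq \mathcal{E}_{\mathcal{A} \oplus \mathcal{B}}$, and likewise with $\mathcal{B}$ in place of $\mathcal{A}$.

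First I would fix an arbitrary trip $m \in \mathcal{M}$ and an arbitrary $p \in P_\mathcal{A}^m$, i.e.\ a path from $s^m$ to $t^m$ in $G_\mathcal{A}$. By definition $p$ is a walk whose vertices are pairwise distinct and all lie in $\mathcal{V}_\mathcal{A}$, whose consecutive pairs are edges of $\mathcal{E}_\mathcal{A}$, and whose endpoints are $s^m$ and $t^m$. I would then verify the three things needed for $p$ to be a path of the same trip in $G_{\mathcal{A} \oplus \mathcal{B}}$: (i)~every edge traversed by $p$ lies in $\mathcal{E}_\mathcal{A} \subseteq \mathcal{E}_{\mathcal{A} \oplus \mathcal{B}}$, so $p$ is still a walk in $G_{\mathcal{A} \oplus \mathcal{B}}$; (ii)~the pairwise distinctness of its vertices is an intrinsic property of the walk and is unaffected by enlarging the vertex set, so $p$ remains a (simple) path; (iii)~its endpoints are unchanged, so it still joins $s^m$ to $t^m$. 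Hence $p \in P_{\mathcal{A} \oplus \mathcal{B}}^m$. Since $m$ and $p$ were arbitrary, $P_\mathcal{A} = \bigcup_{m \in \mathcal{M}} P_\mathcal{A}^m \subseteq \bigcup_{m \in \mathcal{M}} P_{\mathcal{A} \oplus \mathcal{B}}^m = P_{\mathcal{A} \oplus \mathcal{B}}$. Swapping the roles of $\mathcal{A}$ and $\mathcal{B}$ and using $\mathcal{V}_\mathcal{B} \subseteq \mathcal{V}_{\mathcal{A} \oplus \mathcal{B}}$, $\mathcal{E}_\mathcal{B} \subseteq \mathcal{E}_{\mathcal{A} \oplus \mathcal{B}}$ gives $P_\mathcal{B} \subseteq P_{\mathcal{A} \oplus \mathcal{B}}$ by the identical argument.

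There is essentially no obstacle here: the claim is a bookkeeping fact about the $\oplus$ operation, and the only point deserving a word of care is that ``being a path'' (as opposed to merely a walk) is monotone under vertex/edge additions — enlarging the graph never forces a previously simple route to repeat a vertex — which is why the argument of step (ii) is legitimate. I would therefore keep the proof to the two short paragraphs above. The lemma will then be invoked downstream to compare $\min\{c_p \mid p \in P_{\mathcal{A} \oplus \mathcal{B}}\}$ against the corresponding minima over $P_\mathcal{A}$ and $P_\mathcal{B}$ in the parallel-paths supermodularity argument built on Eq.~\eqref{eq:parallel_special}.
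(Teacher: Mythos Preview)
Your proof is correct and follows the same idea as the paper's: since $G_{\mathcal{A} \oplus \mathcal{B}} = G_\mathcal{A} \cup G_\mathcal{B}$, any path in either factor survives in the union, hence $P_\mathcal{A} \cup P_\mathcal{B} \subseteq P_{\mathcal{A} \oplus \mathcal{B}}$. The paper states this in one line, whereas you spell out explicitly why being a simple $s^m$--$t^m$ path is preserved under edge/vertex enlargement; the extra detail is harmless and arguably clearer.
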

\begin{proof}
The proof follows from the fact that graph $G_{\mathcal{A} \oplus \mathcal{B}}$ is defined as $G_{\mathcal{A} \oplus \mathcal{B}} = G_{\mathcal{A}} \cup G_{\mathcal{B}}$ and thus we can write $  P_{\mathcal{A}} \cup P_{ \mathcal{B}} \subseteq P_{\mathcal{A} \oplus \mathcal{B}}$.  Hence, we can rewrite the thesis of Lemma~\ref{lemma:sets} as $P_\mathcal{A}, P_\mathcal{B} \subseteq P_{\mathcal{A}} \cup P_{ \mathcal{B}}$ which is true by the definition of the union over sets. 
\end{proof}

\begin{theorem}
    \label{theo:parallel_agnostic}
     $\Lambda_{\mathrm{MC}}$ is a supermodular set function of the set the set $\mathcal{G}''$.
\end{theorem}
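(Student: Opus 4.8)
Here is how I would prove Theorem~\ref{theo:parallel_agnostic}. The plan is to exploit the closed form~\eqref{eq:parallel_special} valid on $\mathcal{G}''$, which reduces the statement to an elementary inequality about minima. Write $\mu(\mathcal{A}) := \min\{c_p \mid p \in P_{\mathcal{A}\oplus\mathcal{I}}\}$, so that $\Lambda_{\mathrm{MC}}(\mathcal{A}) = d\,\mu(\mathcal{A})$; the minimum is over a non-empty set because the trip spanning tree $G_\mathcal{I}$ always supplies at least one $s$--$t$ path. Since $d>0$, for any $\mathcal{A}\subseteq\mathcal{B}\subseteq\mathcal{G}''$ and $x\in\mathcal{G}''\setminus\mathcal{B}$ it suffices to establish
\[
\mu(\mathcal{A}) - \mu(\mathcal{A}\cup\{x\}) \;\geqslant\; \mu(\mathcal{B}) - \mu(\mathcal{B}\cup\{x\}).
\]

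First I would record two facts. (i) \emph{$\mu$ is non-increasing along $\subseteq$}: by Lemma~\ref{lemma:sets}, $\mathcal{A}\subseteq\mathcal{B}$ gives $P_{\mathcal{A}\oplus\mathcal{I}}\subseteq P_{\mathcal{B}\oplus\mathcal{I}}$, and a minimum over a larger set cannot increase, so $\mu(\mathcal{B})\leqslant\mu(\mathcal{A})$. (ii) \emph{Adding $x$ contributes exactly one new $s$--$t$ path, whose cost $c_x$ does not depend on the ambient set}: the graphs in $\mathcal{G}''$ are parallel, meeting pairwise only in $\{s,t\}$, and $G_x$ is itself a path graph with every edge of capacity at least $d$; hence for any $\mathcal{C}\subseteq\mathcal{G}''\setminus\{x\}$ the graph $G_{(\mathcal{C}\cup\{x\})\oplus\mathcal{I}}$ is $G_{\mathcal{C}\oplus\mathcal{I}}$ with the single internally-disjoint path $p_x$ of $G_x$ attached, and no further $s$--$t$ path can be formed since $G_x$ shares no internal vertex with $G_{\mathcal{C}\oplus\mathcal{I}}$. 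Thus $P_{(\mathcal{C}\cup\{x\})\oplus\mathcal{I}} = P_{\mathcal{C}\oplus\mathcal{I}}\cup\{p_x\}$ and $\mu(\mathcal{C}\cup\{x\}) = \min\{\mu(\mathcal{C}),\,c_x\}$, where $c_x := c_{p_x}$.

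To finish, I would combine (i) and (ii): the marginal gain of adding $x$ to $\mathcal{C}$ is $\mu(\mathcal{C}) - \min\{\mu(\mathcal{C}),c_x\} = \max\{0,\ \mu(\mathcal{C})-c_x\}$, which is a non-decreasing function of $\mu(\mathcal{C})$. Evaluating it at $\mathcal{C}=\mathcal{A}$ and $\mathcal{C}=\mathcal{B}$ and invoking $\mu(\mathcal{A})\geqslant\mu(\mathcal{B})$ from (i) gives $\mu(\mathcal{A})-\mu(\mathcal{A}\cup\{x\}) \geqslant \mu(\mathcal{B})-\mu(\mathcal{B}\cup\{x\})$; multiplying through by $d>0$ yields the supermodularity inequality for $\Lambda_{\mathrm{MC}}$.

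The main obstacle is fact (ii): one must argue rigorously that inserting the parallel path graph $G_x$ creates no ``detour'' path, irrespective of which other parallel graphs are already present. This is exactly what the restriction to $\mathcal{G}''$ buys us — the condition that distinct trip path graphs meet only at $s$ and $t$, that each $G_x$ is a path graph, and that each added edge has capacity at least $d$ (so the new path can by itself carry the whole demand and the MC assignment stays trivial). Everything else reduces to one-line manipulations of minima.
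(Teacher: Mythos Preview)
Your proof is correct and follows essentially the same route as the paper's: both reduce $\Lambda_{\mathrm{MC}}$ on $\mathcal{G}''$ to $d\cdot\min\{c_p\}$ via~\eqref{eq:parallel_special} and then compare minima, the paper through a two-case split on whether $\min\{c_p\mid p\in P_{x\oplus\mathcal{A}\oplus\mathcal{I}}\}\gtrless\min\{c_p\mid p\in P_{\mathcal{Y}\oplus\mathcal{A}\oplus\mathcal{I}}\}$ and you through the equivalent closed form $\mu(\mathcal{C})-\mu(\mathcal{C}\cup\{x\})=\max\{0,\mu(\mathcal{C})-c_x\}$. Your explicit isolation of fact~(ii)---that in $\mathcal{G}''$ adjoining $G_x$ contributes exactly one new $s$--$t$ path of fixed cost $c_x$, independent of the ambient set---is precisely what the paper relies on tacitly when it asserts its ``this implies'' steps inside each case.
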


\begin{proof}
The proof can be found in~\ref{sec:proof_parallel_agnostic}.
\end{proof}

The proof depends on the fact that each path can sustain the entire flow. Therefore, the addition of a new path can either:
\begin{itemize}
    \item  Make no change (because it has a higher cost than the best path already in the graph) and will continue to make no change when considering adding it to a superset of the graph.
    \item Make an improvement (because it is the new best path). The improvement is its cost minus the previous best cost. Adding this path to superset of the graph will at best make the same improvement (if the superset graph has the same best cost) or a smaller improvement (if the superset graph has a better best cost than previously).
\end{itemize}

\subsubsection{SO and UE}
\label{sec:identical_parallel}

To prove supermodularity in the SO and UE problem, we consider the case of parallel identical trip path graphs. In particular, we restrict our attention to the single trip case, as, due to the parallel nature of the paths we consider, each trip can be treated independently. The set of possible path additions is $\mathcal{H}'' \subseteq \mathcal{H}$, defined as
\begin{multline*}
    \mathcal{H}'' = \{G_x | G_x \in \mathcal{H} \: \wedge \forall G_y \in \mathcal{H}'' \: ( G_y \neq G_x \rightarrow  \\ 
    \mathcal{E}_x \cap \mathcal{E}_y \cap \mathcal{E}_\mathcal{I} = \varnothing \: \wedge  \mathcal{V}_x \cap \mathcal{V}_y \cap \mathcal{V}_\mathcal{I} = \{s,t\} \: \wedge \\
    \forall p \in P_x, \forall k \in P_y, \forall t \in \R_{\geqslant 0} \:
    ( c_p(t) = c_k(t) ) \}.
\end{multline*}

All the paths being identical, we can treat them as edges with the same length $l$, the same maximum velocity $v_{\mathrm{max}}$ and the same capacity $u$. Thanks to these assumptions, we can use the following flow assignment:
\begin{equation}
    x_p = \frac{d}{n_{P}} \quad \forall p \in P,
    \label{eq:ass_parallel}
\end{equation}
where $n_{P} = |P|$ is the number of parallel paths available. We show that this assignment is optimal for both SO and UE on $\mathcal{H}''$ when using Greenshields' cost function.

\begin{lemma}
Given an instance $(G_{\mathcal{A} \oplus \mathcal{I}}, \mathcal{M}, \mathbf{c})$ where $\mathcal{A} \subseteq \mathcal{H}''$, $\mathcal{M} = \{(s,t,d)\}$, and $\mathbf{c}$ is a vector of Greenshields affine velocity functions, assignment~\eqref{eq:ass_parallel} is an optimal solution for the UE problem.
\label{lemma:ue_parallel}
\end{lemma}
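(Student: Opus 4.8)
The plan is to verify the hypotheses of Lemma~\ref{lemma:uer}, which characterises UE flows exactly: the assignment~\eqref{eq:ass_parallel} is at UE precisely when it is feasible and, for the single trip $m=(s,t,d)$, every pair of paths $p_1,p_2 \in P$ with $x_{p_1}>0$ satisfies $c_{p_1}(x_{p_1}) \leqslant c_{p_2}(x_{p_2})$. Since the UE objective~\eqref{eq:uer_objective} is convex (a sum of integrals of non-negative monotone increasing functions), a flow meeting this characterisation is a global optimum of the UE program, so it is enough to establish these two facts.

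First I would check feasibility and well-definedness. By construction $\sum_{p \in P} x_p = n_P \cdot (d/n_P) = d$, so the conservation constraint~\eqref{eq:conservation_uer} holds, and $x_p = d/n_P > 0$ gives~\eqref{eq:non_zero_uer}. Because the trip path graphs of $\mathcal{H}''$ share only $s$ and $t$ with one another and with $G_\mathcal{I}$, they are pairwise edge-disjoint, so the flow induced on each edge $(i,j)$ of a path $p$ equals $x_{ij}=x_p=d/n_P$; this makes~\eqref{eq:paths_uer} hold and, via the capacity assumptions inherited from Property~4 of the trip spanning tree and the structure of $\mathcal{H}''$, keeps each edge flow inside the domain of the Greenshields function~\eqref{eq:greenshields}. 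Consequently every path cost $c_p(x_p)$ is finite and depends only on $x_p$.

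Next I would evaluate the path costs. By the definition of $\mathcal{H}''$ all edges carry the same length $l$, maximum velocity $v_{\mathrm{max}}$, and capacity $u$, so every path $p\in P$ has one and the same cost function $c(\cdot)$. The uniform assignment places flow $d/n_P$ on every path, hence $c_p(x_p) = c(d/n_P)$ for all $p \in P$; in particular, for every pair $p_1, p_2 \in P$ — and therefore for the pairs with $x_{p_1}>0$, which here is all of them — we get $c_{p_1}(x_{p_1}) = c_{p_2}(x_{p_2})$, so the inequality of Lemma~\ref{lemma:uer} holds with equality. Applying Lemma~\ref{lemma:uer} together with the convexity remark, assignment~\eqref{eq:ass_parallel} is an optimal solution of the UE problem. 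The argument is short because the symmetry of $\mathcal{H}''$ collapses the UE condition to an identity; the only delicate point is the bookkeeping in the second paragraph — confirming that edge-disjointness really makes each path cost a function of its own flow alone, and that the induced edge flows stay in the domain of the Greenshields function. I would expect the companion SO statement to be harder, since the characterisation in Lemma~\ref{lemma:sor} involves the derivatives $k_p'$ of the effective costs, and one must further check that the uniform assignment equalises those as well.
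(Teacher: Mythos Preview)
Your proposal is correct and follows essentially the same approach as the paper: both invoke Lemma~\ref{lemma:uer} and observe that, since all paths in $\mathcal{H}''$ share the same cost function and the uniform assignment~\eqref{eq:ass_parallel} places equal flow on each, the UE inequality $c_{p_1}(x_{p_1}) \leqslant c_{p_2}(x_{p_2})$ holds with equality. You are more careful than the paper in explicitly verifying feasibility and the edge-disjointness bookkeeping, which the paper's proof leaves implicit.
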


\begin{proof}
By Lemma~\ref{lemma:uer} we know that if we can show that $\forall m \in \mathcal{M}$ and $p_1,p_2 \in P_{\mathcal{A} \oplus \mathcal{I}}^m$ with $x_{p_1} > 0$, we have $c_{p_1}(x_{p_1}) \leqslant c_{p_2}(x_{p_2})$ then the assignment is at UE.

Given that all paths are identical and, under assignment~\eqref{eq:ass_parallel}, all flows are positive, we can rewrite
\begin{equation*}
    c_{p_1}(x_{p_1}) \leqslant c_{p_2}(x_{p_2})
\end{equation*}
as
\begin{equation*}
    \frac{l}{v_{\mathrm{max}}\left (1-\frac{d}{|P_{\mathcal{A} \oplus \mathcal{I}}|u}\right)} \leqslant
    \frac{l}{v_{\mathrm{max}}\left (1-\frac{d}{|P_{\mathcal{A} \oplus \mathcal{I}}|u}\right)},
\end{equation*}
which is always true. 
\end{proof}

\begin{lemma}
Given an instance $(G_{\mathcal{A} \oplus \mathcal{I}}, \mathcal{M}, \mathbf{c})$ where $\mathcal{A} \subseteq \mathcal{H}''$ and $\mathcal{M} = \{(s,t,d)\}$, and $\mathbf{c}$ is a vector of Greenshields affine velocity functions, assignment~\eqref{eq:ass_parallel} is an optimal solution for the SO problem.
\label{lemma:sor_parallel}
\end{lemma}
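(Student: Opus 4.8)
The plan is to mirror the proof of Lemma~\ref{lemma:ue_parallel}, substituting the SO optimality characterization of Lemma~\ref{lemma:sor} for the UE one of Lemma~\ref{lemma:uer}. Recall first that, as noted in Section~\ref{sec:so}, the SO objective~\eqref{eq:sor_objective} is convex for the cost functions we consider, so any feasible flow meeting the local optimality condition of Lemma~\ref{lemma:sor} is in fact globally optimal; it therefore suffices to exhibit one feasible flow satisfying that condition, and the natural candidate is assignment~\eqref{eq:ass_parallel}.

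I would first check that assignment~\eqref{eq:ass_parallel} is feasible: it is non-negative, it satisfies flow conservation since $\sum_{p \in P_{\mathcal{A}\oplus\mathcal{I}}} x_p = n_{P}\cdot\frac{d}{n_{P}} = d$, and each $x_p = d/n_{P} \leqslant d$ lies within the common path capacity $u$ thanks to the feasibility guarantee carried by the trip spanning tree, so every Greenshields cost along the way is well defined.

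The core step is then immediate. Write $k_p(x_p) = x_p c_p(x_p)$ for the effective cost of a path and $k_p'$ for its derivative. By the definition of $\mathcal{H}''$ every path in $P_{\mathcal{A}\oplus\mathcal{I}}$ carries the same cost function with the same parameters $l$, $v_{\mathrm{max}}$, $u$, and the paths are parallel (internally disjoint apart from $s$ and $t$), so $k_p'$ is literally one and the same scalar function on every path. Under~\eqref{eq:ass_parallel} every path carries the same positive flow $d/n_{P}$, hence for all $p_1,p_2 \in P_{\mathcal{A}\oplus\mathcal{I}}$ we get
\[
k_{p_1}'(x_{p_1}) = k_{p_1}'\!\left(\tfrac{d}{n_{P}}\right) = k_{p_2}'\!\left(\tfrac{d}{n_{P}}\right) = k_{p_2}'(x_{p_2}),
\]
so in particular $k_{p_1}'(x_{p_1}) \leqslant k_{p_2}'(x_{p_2})$. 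By Lemma~\ref{lemma:sor} the assignment is SO-optimal, and by the convexity noted above this stationary point is a global optimum, which proves the claim.

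I do not expect a genuine obstacle here, since the structure repeats that of Lemma~\ref{lemma:ue_parallel}. The only points that require a little care are (i) invoking convexity of the SO objective so that the first-order condition of Lemma~\ref{lemma:sor} certifies a global rather than merely stationary optimum, and (ii) remembering that the SO condition equalizes the marginal (effective) costs $k_p'$ across used paths, not the raw travel times $c_p$ as in the UE case --- a distinction that is invisible here precisely because all paths are interchangeable and carry equal flow.
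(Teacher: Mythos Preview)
Your proposal is correct and follows essentially the same route as the paper: both invoke Lemma~\ref{lemma:sor}, observe that under~\eqref{eq:ass_parallel} every (identical) path carries the same positive flow, and conclude that the marginal-cost inequality $k'_{p_1}(x_{p_1}) \leqslant k'_{p_2}(x_{p_2})$ holds trivially as an equality. Your version is slightly more thorough in that you explicitly verify feasibility and flag the role of convexity, whereas the paper simply writes out the two identical derivative expressions and notes the inequality is ``always true''.
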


\begin{proof}
By Lemma~\ref{lemma:sor} we know that if we can show that $\forall m \in \mathcal{M}$ and $p_1,p_2 \in P_{\mathcal{A} \oplus \mathcal{I}}^m$ with $x_{p_1} > 0$, we have $k'_{p_1}(x_{p_1}) \leqslant k'_{p_2}(x_{p_2})$ than the assignment is at UE.

Given that all paths are identical and, under assignment~\eqref{eq:ass_parallel}, all flows are positive, we can rewrite

\begin{equation*}
    k'_{p_1}(x_{p_1}) \leqslant k'_{p_2}(x_{p_2})
\end{equation*}
as
\begin{equation*}
\begin{aligned}
    \left (\frac{d}{|P_{\mathcal{A} \oplus \mathcal{I}}|} \cdot \frac{l}{v_{\mathrm{max}}\left (1-\frac{d}{|P_{\mathcal{A} \oplus \mathcal{I}}|u}\right)} \right )' \leqslant \\
   \left (\frac{d}{|P_{\mathcal{A} \oplus \mathcal{I}}|} \cdot \frac{l}{v_{\mathrm{max}}\left (1-\frac{d}{|P_{\mathcal{A} \oplus \mathcal{I}}|u}\right)} \right )',
  \end{aligned}
\end{equation*}
which is always true. 
\end{proof}

\begin{theorem}
$\Lambda_{\mathrm{SO}}$ and $\Lambda_{\mathrm{UE}}$ are supermodular set functions of the set $\mathcal{H}''$.
\label{theo:parallel_routing}
\end{theorem}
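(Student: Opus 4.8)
The plan is to collapse both set functions $\Lambda_{\mathrm{SO}}$ and $\Lambda_{\mathrm{UE}}$, on the restricted domain $\mathcal{H}''$, to a single scalar function of the number of available parallel paths, and then deduce supermodularity from the convexity of that scalar function.

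\textbf{Step 1: reduction to a cardinality function.} First I would note that, since any two distinct trip path graphs in $\mathcal{H}''$ and the trip spanning tree $G_\mathcal{I}$ pairwise share only the vertices $s$ and $t$ and no edges, every $s$--$t$ path in $G_{\mathcal{A}\oplus\mathcal{I}}$ is contained entirely in one of these subgraphs; hence no new induced paths are created and $P_{\mathcal{A}\oplus\mathcal{I}}$ is the disjoint union of $P_\mathcal{I}$ with the singleton path sets $P_x$ for $G_x\in\mathcal{A}$, so $|P_{\mathcal{A}\oplus\mathcal{I}}| = 1 + |\mathcal{A}|$. By Lemmas~\ref{lemma:ue_parallel} and~\ref{lemma:sor_parallel} the uniform assignment~\eqref{eq:ass_parallel} is optimal for both UE and SO, so substituting $x_p = d/n$ into the Greenshields cost~\eqref{eq:greenshields} yields, with $n = |P_{\mathcal{A}\oplus\mathcal{I}}|$,
\[
\Lambda_{\mathrm{SO}}(\mathcal{A}) = \Lambda_{\mathrm{UE}}(\mathcal{A}) = n\cdot\frac{d}{n}\cdot\frac{l}{v_{\mathrm{max}}\left(1-\frac{d}{nu}\right)} = \frac{dl}{v_{\mathrm{max}}}\cdot\frac{nu}{nu-d} =: g\bigl(|\mathcal{A}|+1\bigr).
\]

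\textbf{Step 2: supermodularity from discrete convexity of $g$.} Since adding any $x\in\mathcal{H}''\setminus\mathcal{B}$ increases the path count by exactly one, for any $\mathcal{A}\subseteq\mathcal{B}\subseteq\mathcal{H}''$ we have $\Lambda(\mathcal{A})-\Lambda(\mathcal{A}\cup\{x\}) = g(|\mathcal{A}|+1)-g(|\mathcal{A}|+2)$ and likewise for $\mathcal{B}$. Because $|\mathcal{A}|\leqslant|\mathcal{B}|$, the supermodularity inequality from the definition reduces exactly to the statement that $k\mapsto g(k)-g(k+1)$ is non-increasing on the integers $k\geqslant 1$, i.e.\ that $g$ is discretely convex. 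It therefore suffices to establish convexity of $g$.

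\textbf{Step 3: convexity of $g$.} Writing $g(n) = \frac{dl}{v_{\mathrm{max}}}\left(1 + \frac{d}{nu-d}\right)$ and treating $n$ as a real variable, the second derivative is $\frac{2 d^2 l u^2}{v_{\mathrm{max}}(nu-d)^3}>0$ on the feasibility domain $nu>d$, which contains every attainable value of $n$ since the trip spanning tree is feasible. Hence $g$ is convex there, and in particular on the integer grid, which closes the argument for both $\Lambda_{\mathrm{SO}}$ and $\Lambda_{\mathrm{UE}}$. The only genuinely delicate point is Step~1: verifying that the vertex-disjointness condition defining $\mathcal{H}''$ forces the path count to be exactly $1+|\mathcal{A}|$, so that the optimal total travel time collapses to the clean closed form $g$; once that reduction is in place, Steps~2--3 rest solely on the elementary convexity of $n\mapsto 1/(nu-d)$. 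An analogous computation with the BPR cost~\eqref{eq:brp} would extend the result, since there too the uniform assignment gives a total cost that is a convex function of $n$.
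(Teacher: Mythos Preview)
Your proof is correct and shares the paper's core reduction: both invoke Lemmas~\ref{lemma:ue_parallel} and~\ref{lemma:sor_parallel} to collapse $\Lambda_{\mathrm{SO}}$ and $\Lambda_{\mathrm{UE}}$ on $\mathcal{H}''$ to a single scalar function of the path count $n$, and both then verify the supermodularity inequality for the Greenshields cost. The difference lies in the final verification. The paper writes out the inequality
\[
d\,c_p\!\left(\tfrac{d}{n}\right) - d\,c_p\!\left(\tfrac{d}{n+1}\right) \;\geqslant\; d\,c_p\!\left(\tfrac{d}{n+n_y}\right) - d\,c_p\!\left(\tfrac{d}{n+n_y+1}\right)
\]
and clears denominators by brute force until it reduces to $n_y u(n_y u + u + 2nu - 2d)\geqslant 0$. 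You instead observe that, once the objective depends only on $|\mathcal{A}|$, supermodularity is \emph{equivalent} to discrete convexity of $g$, and then check convexity via a one-line second-derivative computation. Your route is cleaner and more structural: it isolates the property of the cost function that actually drives the result, and---as you point out---makes the extension to BPR or any other cost yielding a convex $g$ immediate, whereas the paper's algebraic manipulation is specific to Greenshields and would have to be redone from scratch for each new latency model.
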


\begin{proof}
The proof can be found in~\ref{sec:proof_parallel_routing}.
\end{proof}

The proof leverages the concept that, after every new addition, the flow will split in equal parts on the available paths. Thus, the benefit of an addition when there are fewer paths available is intuitively greater than the benefit of such addition in a network with more paths.

\section{CONCLUSION}
\label{sec:conclusion}
To conclude, we have analysed and proven several fundamental properties of path additions to transport networks. We have done this for the most popular routing formulations in traffic research. It is shown that, while the total travel time of cooperative agents (MC and SO) is monotonic non-increasing with respect to path additions, it is not supermodular. This impossibility result constitutes a fundamental insight for transport network designers as it informs that approximation algorithms for this problem cannot be designed by leveraging supermodularity in the general case.

On the other hand, we have introduced special cases that consider parallel path additions, where supermodularity holds. These instances could map to real-world scenarios where parallel paths are guaranteed by using dedicated lanes that are either physically or virtually enforced.

Our investigation and categorisation of network design properties under different routing paradigms was motivated by the prospective impact of CAVs on current transport networks. It is conceivable that the widespread adoption of CAVs could lead to a redesign of our mobility infrastructure. More work is needed in this area to understand how network design solutions can be devised and applied to the problem of designing future mobility systems and we believe that the present work constitutes a step towards this overarching goal.


\section*{ACKNOWLEDGMENT}
We gratefully acknowledge the support of ERC Project 949940 (gAIa) and of ARL grant DCIST CRA W911NF-17-2-0181.

\section{APPENDIX}
\subsection{Notation used in the paper}
\label{sec:notation}
\begin{table}[H]
\begin{center}
\resizebox{0.84\columnwidth}{!}{%
\begin{tabular}{c c p{10cm}}
\toprule
$G_x$ & Graph $x$\\
$\mathcal{V}_x$ & Node set of graph $x$\\
$\mathcal{E}_x$ & Edge set of graph $x$\\
$P_x$ & Path set of graph $x$\\
$P_x^m$ & Path set of graph $x$ w.r.t. trip $m$\\
$P_{y|x}^m$ & Paths added to graph $x$ by the addition of graph $y$ w.r.t. trip $m$\\
$m \in \mathcal{M}$ & Trip $m$ in the set of all trips\\
$s^m$ & Source of trip $m$\\
$t^m$ & Destination of trip $m$\\
$d^m$ & Demand of trip $m$\\
$d_i^m$ & Demand of trip $m$ at node $i$\\
$c_{ij}$ & Cost of edge $(i,j)$\\
$c_p$ & Cost of path $p$\\
$u_{ij}$ & Capacity of edge $(i,j)$\\
$u_{p}$ & Capacity of path $p$\\
$l_{ij}$ & Length of edge $(i,j)$\\
$v_{ij}^{\mathrm{max}}$ & Maximum velocity on edge $(i,j)$\\
$x_{ij}$ & Flow on edge $(i,j)$\\
$x_{ij}^m$ & Flow on edge $(i,j)$ w.r.t. trip $m$\\
$x_p$ & Flow on path $p$\\
$\mathbf{x}$ & Vector of all path flows\\
\midrule
$G_{\mathcal{I}}$ & Trip spanning tree\\
$G_{\mathcal{T}}$ & Template graph\\
$G_{m_x}$ & Trip path graph $x$ w.r.t. trip $m$\\
\midrule
$\mathcal{G}$ &  Set of all path additions with constant cost\\
$\mathcal{H}$ & Set of all path additions with flow-dependent cost\\
$\mathcal{G}'$ & Subset of $\mathcal{G}$, where paths do not share edges\\ 
& and all edges have the same cost and capacity\\
$\mathcal{H}'$ & Subset of $\mathcal{H}$, where paths do not share edges\\
& and all edges have the same properties\\
$\mathcal{G}''$ & Subset of $\mathcal{G}'$, where paths do not share nodes\\
& and can sustain the whole flow demand\\
$\mathcal{H}''$ &  Subset of $\mathcal{H}'$, where paths do not share nodes \\ & and have the same properties\\
\midrule
$\Lambda_{\mathrm{MC}}$ & Objective function that takes a graph as input \\
& and computes the total travel time using MC routing\\
$\Lambda_{\mathrm{SO}}$ & Objective function that takes a graph as input \\
& and computes the total travel time using SO routing \\
$\Lambda_{\mathrm{UE}}$ & Objective function that takes a graph as input \\
& and computes the total travel time using UE routing \\
\bottomrule
\end{tabular}}
\end{center}
\label{tab:notation}
\end{table}

\subsection{Proof of Theorem~\ref{theo:so_non_increasing}}
\label{sec:proof_so_non_inreasing}

\setcounter{theorem}{1}
\begin{theorem}
    $\Lambda_{\mathrm{SO}}$ is a monotonic non-increasing set function of the set $\mathcal{H}$.
\end{theorem}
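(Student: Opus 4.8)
The plan is to exploit the fact that $\Lambda_{\mathrm{SO}}(\mathcal{A})$ is by definition the optimal value of the minimisation problem~\eqref{eq:so} posed on $G_{\mathcal{A}\oplus\mathcal{I}}$, so enlarging the set of admissible paths can only lower (or leave unchanged) that optimum. Fix $\mathcal{A}\subseteq\mathcal{B}\subseteq\mathcal{H}$; the goal is $\Lambda_{\mathrm{SO}}(\mathcal{B})\leqslant\Lambda_{\mathrm{SO}}(\mathcal{A})$. The first step is to observe that $G_{\mathcal{A}\oplus\mathcal{I}}$ is a subgraph of $G_{\mathcal{B}\oplus\mathcal{I}}$: since $\mathcal{A}\subseteq\mathcal{B}$, the graph union $G_\mathcal{A}=\bigcup_{G_x\in\mathcal{A}}G_x$ is contained in $G_\mathcal{B}$, hence $\mathcal{V}_\mathcal{A}\cup\mathcal{V}_\mathcal{I}\subseteq\mathcal{V}_\mathcal{B}\cup\mathcal{V}_\mathcal{I}$ and likewise for the edge sets. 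Because an $s^m$--$t^m$ path survives once all of its edges are present, this yields $P^m_{\mathcal{A}\oplus\mathcal{I}}\subseteq P^m_{\mathcal{B}\oplus\mathcal{I}}$ for every trip $m$ (the same inclusion underlying Lemma~\ref{lemma:sets}).

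The second step is a zero-padding argument. Let $\mathbf{x}^{\mathrm{SO}}$ be an optimal path-flow vector achieving $\Lambda_{\mathrm{SO}}(\mathcal{A})$ on $G_{\mathcal{A}\oplus\mathcal{I}}$, and define $\tilde{\mathbf{x}}$ on $G_{\mathcal{B}\oplus\mathcal{I}}$ by keeping $\tilde{x}_p=x^{\mathrm{SO}}_p$ on the old paths $p\in P_{\mathcal{A}\oplus\mathcal{I}}$ and setting $\tilde{x}_p=0$ on the newly available paths $p\in P_{\mathcal{B}\oplus\mathcal{I}}\setminus P_{\mathcal{A}\oplus\mathcal{I}}$. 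I then check that $\tilde{\mathbf{x}}$ is feasible for~\eqref{eq:so} on $G_{\mathcal{B}\oplus\mathcal{I}}$: non-negativity~\eqref{eq:non_zero_sor} is immediate, and conservation~\eqref{eq:conservation_sor} holds because the extra paths of each trip carry no flow, so $\sum_{p\in P^m_{\mathcal{B}\oplus\mathcal{I}}}\tilde{x}_p=\sum_{p\in P^m_{\mathcal{A}\oplus\mathcal{I}}}x^{\mathrm{SO}}_p=d^m$; the trip spanning tree $G_\mathcal{I}$ guarantees feasibility of both problems, so this is not vacuous. Evaluating~\eqref{eq:sor_objective} at $\tilde{\mathbf{x}}$, the padded paths contribute $0\cdot c_p(0)=0$ — here $c_p(0)$ is finite for every reasonable congestion function (e.g. $c_p(0)=\sum_{(i,j)\in p} l_{ij}/v_{ij}^{\mathrm{max}}$ for Greenshields and $c_p(0)=\sum_{(i,j)\in p}c_{ij}^0$ for BPR) — so the objective at $\tilde{\mathbf{x}}$ equals exactly $\Lambda_{\mathrm{SO}}(\mathcal{A})$. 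Since $\Lambda_{\mathrm{SO}}(\mathcal{B})$ is the minimum of~\eqref{eq:sor_objective} over all feasible flows on $G_{\mathcal{B}\oplus\mathcal{I}}$ and $\tilde{\mathbf{x}}$ is one of them, $\Lambda_{\mathrm{SO}}(\mathcal{B})\leqslant\Lambda_{\mathrm{SO}}(\mathcal{A})$, which is the claim.

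I do not expect a genuinely hard step: this is the standard ``more options cannot hurt a cooperative optimiser'' principle. The two points requiring a little care are (i) the passage $\mathcal{A}\subseteq\mathcal{B}\Rightarrow P_{\mathcal{A}\oplus\mathcal{I}}\subseteq P_{\mathcal{B}\oplus\mathcal{I}}$, which relies on graph unions only adding nodes and edges and on a path being preserved once its edge set is present; and (ii) the legitimacy of zero-padding, which works precisely because in this formulation $c_p$ depends only on $x_p$, so assigning zero flow to the new paths does not perturb the costs on the pre-existing ones. By contrast, for $\Lambda_{\mathrm{UE}}$ the zero-padded flow need not be an equilibrium, which is exactly the loophole Braess' paradox exploits, so this argument does not transfer to the self-interested case.
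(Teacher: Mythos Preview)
Your proof is correct and follows essentially the same route as the paper's: both arguments observe that $P_{\mathcal{A}\oplus\mathcal{I}}\subseteq P_{\mathcal{B}\oplus\mathcal{I}}$ and then zero-pad the $\mathcal{A}$-optimal flow on the extra paths to exhibit a feasible flow for the $\mathcal{B}$-problem with the same objective value, whence the minimum over the larger feasible set can only be smaller. Your write-up is in fact slightly more explicit about the two delicate points (path-set inclusion and why zero-padding leaves the old-path costs unchanged under the paper's $c_p(x_p)$ convention), and your closing remark on why the argument fails for $\Lambda_{\mathrm{UE}}$ is apt.
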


\begin{proof}
We need to show that given two sets of trip path graphs $\mathcal{A}$ and $\mathcal{B}$ such that $\mathcal{A} \subseteq \mathcal{B} \subseteq \mathcal{H}$, we have that $\Lambda(\mathcal{B}) \leqslant \Lambda(\mathcal{A})$.

We can rewrite $\Lambda_{\mathrm{SO}}(\mathcal{A})$ as
  \begin{equation*}
   {\sum_{p \in P_{\mathcal{A} \oplus \mathcal{I}}} x^{\mathrm{SO}}_p c_p(x^{\mathrm{SO}}_p)} =
  \min_{\mathbf{x}}{\left \{ \sum_{p \in P_{\mathcal{A} \oplus \mathcal{I}}} x_p c_p(x_p) \right \}} ,
\end{equation*}
subject to constraints~\eqref{eq:non_zero_sor} and~\eqref{eq:conservation_sor}. $\Lambda(\mathcal{B}) \leqslant \Lambda(\mathcal{A}) $ thus becomes
    \begin{equation*}
     \min_{\mathbf{x}}{\left \{ \sum_{p \in P_{\mathcal{B} \oplus \mathcal{I}}}  x_p c_p(x_p) \right \}} \leqslant \min_{\mathbf{x}}{\left \{ \sum_{p \in P_{\mathcal{A} \oplus \mathcal{I}}}  x_p c_p(x_p) \right \}}.
    \end{equation*}
    If $\mathcal{A} \subseteq \mathcal{B}$ there exists $\mathcal{Y}$ s.t. $\mathcal{B} = \mathcal{A} \cup \mathcal{Y}$. Therefore we can rewrite the above equation as
    \begin{equation*}
     \min_{\mathbf{x}}{\left \{ \sum_{p \in P_{\mathcal{Y} \oplus \mathcal{A} \oplus \mathcal{I}}}  x_p c_p(x_p) \right \}} \leqslant \min_{\mathbf{x}}{\left \{ \sum_{p \in P_{\mathcal{A} \oplus \mathcal{I}}}  x_p c_p(x_p) \right \}}.
    \end{equation*}
    We know that $P_{\mathcal{Y} \oplus \mathcal{A} \oplus \mathcal{I}} $ can be decomposed into its two disjoint components $ P_{ \mathcal{A} \oplus \mathcal{I}}$ and $ P_{\mathcal{Y} | \mathcal{A} \oplus \mathcal{I}}$. Thus we can rewrite
    
    \begin{equation*}
        \begin{aligned}
            \min_{\mathbf{x}}{\left \{ \sum_{p \in P_{\mathcal{A} \oplus \mathcal{I}}}  x_p c_p(x_p) + \sum_{p \in P_{\mathcal{Y}|\mathcal{A} \oplus \mathcal{I}}}  x_p c_p(x_p)\right \}} \leqslant \\
            \min_{\mathbf{x}}{\left \{ \sum_{p \in P_{\mathcal{A} \oplus \mathcal{I}}}  x_p c_p(x_p) \right \}}.
        \end{aligned}
    \end{equation*}
    If we turn our attention to the left side of the inequality and particularly to the second term of the sum, we can set $x_p = 0 ,\; \forall p \in P_{\mathcal{Y}|\mathcal{A} \oplus \mathcal{I}}$, thus obtaining, with this additional constraint
    
    \begin{equation}
        \begin{aligned}
            \min_{\mathbf{x}}{\left \{ \sum_{p \in P_{\mathcal{A} \oplus \mathcal{I}}} x_p c_p(x_p) \right \}} =
            \min_{\mathbf{x}}{\left \{ \sum_{p \in P_{\mathcal{A} \oplus \mathcal{I}}} x_p c_p(x_p) \right \}}.
        \end{aligned}
        \label{eq:artificial_constraint}
    \end{equation}
    This confirms our theorem since the minimum of a function subject to a constraint is greater or equal to the minimum of the function without said constraint.
\end{proof}

\subsection{Proof of Theorem~\ref{theo:mc_non_increasing}}
\label{sec:proof_mc_non_increasing}

\begin{theorem}
    $\Lambda_{\mathrm{MC}}$ is a monotonic non-increasing set function of the set $\mathcal{G}$.
\end{theorem}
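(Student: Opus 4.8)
The plan is to mirror the argument used for $\Lambda_{\mathrm{SO}}$ in Theorem~\ref{theo:so_non_increasing}, exploiting the fact that $\Lambda_{\mathrm{MC}}$ is itself the optimal value of a minimisation problem over a feasible polytope, and that enlarging the set of admissible trip path graphs can only enlarge this polytope. Concretely, I would fix $\mathcal{A} \subseteq \mathcal{B} \subseteq \mathcal{G}$, write $\mathcal{B} = \mathcal{A} \cup \mathcal{Y}$, and invoke the decomposition recalled in Section~\ref{sec:path_additions} to split the path set as the disjoint union $P_{\mathcal{B} \oplus \mathcal{I}} = P_{\mathcal{A} \oplus \mathcal{I}} \cup P_{\mathcal{Y}|\mathcal{A} \oplus \mathcal{I}}$ with $P_{\mathcal{A} \oplus \mathcal{I}} \cap P_{\mathcal{Y}|\mathcal{A} \oplus \mathcal{I}} = \varnothing$. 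Using the path formulation \eqref{eq:objective3}--\eqref{eq:conservation3}, I would express $\Lambda_{\mathrm{MC}}(\mathcal{B})$ as the minimum of $\sum_{p \in P_{\mathcal{B} \oplus \mathcal{I}}} x_p c_p$ over path flows satisfying non-negativity \eqref{eq:non_zero3}, conservation \eqref{eq:conservation3}, and the edge-capacity constraints \eqref{eq:capacity3} on $G_{\mathcal{B} \oplus \mathcal{I}}$, and similarly for $\Lambda_{\mathrm{MC}}(\mathcal{A})$.

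The key step is to show that every flow feasible for the problem on $G_{\mathcal{A} \oplus \mathcal{I}}$ lifts to a flow feasible for the problem on $G_{\mathcal{B} \oplus \mathcal{I}}$ with the same objective value, by assigning $x_p = 0$ to every newly introduced path $p \in P_{\mathcal{Y}|\mathcal{A} \oplus \mathcal{I}}$. Conservation \eqref{eq:conservation3} holds because the new paths carry no flow and $P_{\mathcal{A} \oplus \mathcal{I}}^m \subseteq P_{\mathcal{B} \oplus \mathcal{I}}^m$, and non-negativity is trivial; the only point that requires care --- and the main difference from the SO proof, which has no capacity constraints --- is the capacity constraint \eqref{eq:capacity3}. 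For an edge $(i,j)$ already present in $G_{\mathcal{A} \oplus \mathcal{I}}$, the sum $\sum_{p \in P_{\mathcal{B} \oplus \mathcal{I}}:(i,j)\in p} x_p$ reduces, after zeroing the new-path flows, exactly to the corresponding sum over $P_{\mathcal{A} \oplus \mathcal{I}}$, so the constraint is inherited; for an edge appearing only in the newly added part, the flow through it is $0 \leqslant u_{ij}$. Hence the feasible region for $\mathcal{A}$ embeds into that for $\mathcal{B}$ while preserving the objective value, and minimising over the (weakly) larger region yields $\Lambda_{\mathrm{MC}}(\mathcal{B}) \leqslant \Lambda_{\mathrm{MC}}(\mathcal{A})$, exactly the ``minimum over a superset of feasible points is no larger'' argument already used for $\Lambda_{\mathrm{SO}}$.

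I would also add a short remark that neither minimisation is vacuous: Property~(4) of the trip spanning tree guarantees a feasible flow on $G_{\mathcal{I}}$ routing each trip's demand along its unique trip-spanning-tree path, and since path additions never delete edges of $G_{\mathcal{I}}$ nor lower their capacities, this flow stays feasible in every $G_{\mathcal{A} \oplus \mathcal{I}}$; combined with positive edge costs $c_{ij} \in \R_{>0}$ (hence no negative-cost cycles), this also licenses the use of the path formulation via the flow decomposition theorem. The anticipated obstacle is purely the bookkeeping around the capacity constraint under the disjoint decomposition of the path set, specifically handling edges shared between old paths and paths induced by the addition; once the lift above is verified, the conclusion is immediate.
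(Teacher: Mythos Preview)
Your proposal is correct and follows essentially the same route as the paper: mirror the proof of Theorem~\ref{theo:so_non_increasing} by decomposing $P_{\mathcal{B}\oplus\mathcal{I}}$ into $P_{\mathcal{A}\oplus\mathcal{I}}$ and the new paths, zero out the flow on the latter, and conclude via ``minimum over a larger feasible set is no larger,'' with Property~(4) of the trip spanning tree guaranteeing non-emptiness of the feasible region. Your treatment of the capacity constraint~\eqref{eq:capacity3} (splitting into old versus newly introduced edges) is in fact more explicit than the paper's one-line appeal to Property~(4), but the underlying argument is identical.
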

\begin{proof}
The proof follows the same procedure as in Theorem~\ref{theo:so_non_increasing}, in the case of constant costs. Property (4) of a trip spanning tree guarantees that when we impose the additional constraint leading to Equation~\eqref{eq:artificial_constraint}, we do not break constraint~\eqref{eq:capacity2} of MC routing and the problem remains feasible.
\end{proof}

\subsection{Proof of Theorem~\ref{theo:parallel_agnostic}}
\label{sec:proof_parallel_agnostic}

\setcounter{theorem}{5}
\begin{theorem}
     $\Lambda_{\mathrm{MC}}$ is a supermodular set function of the set the set $\mathcal{G}''$.
\end{theorem}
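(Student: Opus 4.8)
The plan is to exploit the closed form of $\Lambda_{\mathrm{MC}}$ on $\mathcal{G}''$ given by Equation~\eqref{eq:parallel_special}, namely $\Lambda_{\mathrm{MC}}(\mathcal{A}) = d\min\{c_p \mid p \in P_{\mathcal{A}\oplus\mathcal{I}}\}$, and reduce the supermodularity inequality to an elementary monotonicity fact about the function $t \mapsto \max\{0,t\}$. Fix $\mathcal{A}\subseteq\mathcal{B}\subseteq\mathcal{G}''$ and $x\in\mathcal{G}''\setminus\mathcal{B}$, and write $a = \min\{c_p \mid p\in P_{\mathcal{A}\oplus\mathcal{I}}\}$ and $b = \min\{c_p \mid p\in P_{\mathcal{B}\oplus\mathcal{I}}\}$. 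First I would observe that, since the trip path graphs in $\mathcal{G}''$ pairwise intersect each other and $G_\mathcal{I}$ only in $\{s,t\}$, adding the single trip path graph $x$ to \emph{any} admissible graph introduces exactly one new $s$--$t$ path, with a cost $c_x$ that does not depend on the base graph (no induced paths are created). Hence $\Lambda_{\mathrm{MC}}(\mathcal{A}\cup\{x\}) = d\min\{a,c_x\}$ and $\Lambda_{\mathrm{MC}}(\mathcal{B}\cup\{x\}) = d\min\{b,c_x\}$.

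Next I would invoke Lemma~\ref{lemma:sets} (with the observation that $\mathcal{A}\subseteq\mathcal{B}$ implies $G_{\mathcal{A}\oplus\mathcal{I}}\subseteq G_{\mathcal{B}\oplus\mathcal{I}}$, hence $P_{\mathcal{A}\oplus\mathcal{I}}\subseteq P_{\mathcal{B}\oplus\mathcal{I}}$) to conclude $b\leqslant a$. Then the two marginal decreases can be written as
\begin{equation*}
\Lambda_{\mathrm{MC}}(\mathcal{A}) - \Lambda_{\mathrm{MC}}(\mathcal{A}\cup\{x\}) = d\bigl(a - \min\{a,c_x\}\bigr) = d\max\{0,\,a-c_x\},
\end{equation*}
\begin{equation*}
\Lambda_{\mathrm{MC}}(\mathcal{B}) - \Lambda_{\mathrm{MC}}(\mathcal{B}\cup\{x\}) = d\bigl(b - \min\{b,c_x\}\bigr) = d\max\{0,\,b-c_x\}.
\end{equation*}
Since $b\leqslant a$ gives $b-c_x\leqslant a-c_x$ and $t\mapsto\max\{0,t\}$ is non-decreasing, we get $d\max\{0,a-c_x\}\geqslant d\max\{0,b-c_x\}$, which is exactly the supermodularity inequality. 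The base case $\mathcal{A}=\varnothing$ is covered because Property (4) of the trip spanning tree guarantees $\Lambda_{\mathrm{MC}}(\varnothing)$ is well defined and finite, and every path in $\mathcal{G}''$ individually sustains the demand $d$, so feasibility is preserved under every addition.

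The argument is essentially routine once the closed form is in hand; the one place that requires care — and the step I would write out most carefully — is the claim that adding $x$ contributes a \emph{single} path of a \emph{base-graph-independent} cost $c_x$. This is precisely where the defining restrictions of $\mathcal{G}''$ (node-disjointness except at $s,t$, single path per trip path graph, capacity $\geqslant d$ on every edge) are used, and it is also what distinguishes this setting from the general case of Theorem~\ref{theo:mc_supermod}, where induced paths arising from shared interior nodes break the diminishing-returns structure.
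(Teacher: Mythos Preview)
Your proof is correct and rests on the same ingredients as the paper's: the closed form $\Lambda_{\mathrm{MC}}(\mathcal{A})=d\min\{c_p\mid p\in P_{\mathcal{A}\oplus\mathcal{I}}\}$, Lemma~\ref{lemma:sets} for the monotonicity $b\leqslant a$, and the structural fact that in $\mathcal{G}''$ node-disjointness forbids induced paths. The difference is only in packaging. The paper does not isolate a single cost $c_x$; it works directly with the quantities $\min\{c_p\mid p\in P_{x\oplus\mathcal{A}\oplus\mathcal{I}}\}$ and $\min\{c_p\mid p\in P_{x\oplus\mathcal{Y}\oplus\mathcal{A}\oplus\mathcal{I}}\}$ and splits into two cases according to which of these compares larger to $\min\{c_p\mid p\in P_{\mathcal{Y}\oplus\mathcal{A}\oplus\mathcal{I}}\}$, with the implications in each case tacitly using the ``no induced paths'' property. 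Your route---naming $c_x$ explicitly and rewriting each marginal as $d\max\{0,\cdot\}$---collapses that case split into a one-line monotonicity argument and makes the role of the $\mathcal{G}''$ restrictions more visible; the paper's version avoids the extra notation but leaves the key structural step implicit. Substantively the two arguments are the same.
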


\begin{proof}
We need to show that, given two subsets $\mathcal{A}$ and $\mathcal{B}$ such that $\mathcal{A} \subseteq \mathcal{B} \subseteq \mathcal{G}''$ and $x \in \mathcal{G}'' \setminus \mathcal{B}$, it holds that
\begin{equation*}
    \Lambda(\mathcal{A}) - \Lambda(\mathcal{A} \cup \{x\}) \geqslant \Lambda(\mathcal{B}) - \Lambda(\mathcal{B} \cup \{x\}).
\end{equation*}
Knowing that $\mathcal{A} \subseteq \mathcal{B}$ we can define $\mathcal{B} = \mathcal{Y} \cup \mathcal{A}$ with $\mathcal{A} \cap \mathcal{Y} = \varnothing$. This yields
\begin{equation*}
    \Lambda(\mathcal{A}) - \Lambda(\mathcal{A} \cup \{x\}) \geqslant \Lambda(\mathcal{Y} \cup \mathcal{A}) - \Lambda(\mathcal{Y} \cup \mathcal{A} \cup \{x\}),
\end{equation*}
which, according to Equation~\eqref{eq:parallel_special}, we can rewrite as
\begin{equation}
\begin{aligned}
  d\min{\left \{c_p |p \in P_{\mathcal{A} \oplus \mathcal{I}} \right\}} - d\min{\left \{c_p |p \in P_{x \oplus \mathcal{A} \oplus \mathcal{I}} \right\}} \geqslant \\
  d\min{\left \{c_p |p \in P_{\mathcal{Y} \oplus \mathcal{A} \oplus \mathcal{I}} \right\}} - d\min{\left \{c_p |p \in P_{x \oplus \mathcal{Y} \oplus \mathcal{A} \oplus \mathcal{I}} \right\}}.
\end{aligned}
\label{eq:parallel_supermod}
\end{equation}
The term $d$ can be simplified knowing that $d \geqslant 0$. We proceed to consider the following two cases:

\textit{Case 1:} $\min{\left \{c_p |p \in P_{x \oplus \mathcal{A} \oplus \mathcal{I}} \right\}} \geqslant \min{\left \{c_p |p \in P_{\mathcal{Y} \oplus \mathcal{A} \oplus \mathcal{I}} \right\}}$. This implies that $$\min{\left \{c_p |p \in P_{\mathcal{Y} \oplus \mathcal{A} \oplus \mathcal{I}} \right\}} = \min{\left \{c_p |p \in P_{x \oplus \mathcal{Y} \oplus \mathcal{A} \oplus \mathcal{I}} \right\}},$$
Equation~\eqref{eq:parallel_supermod} becomes
\begin{equation}
\begin{aligned}
  \min{\left \{c_p |p \in P_{\mathcal{A} \oplus \mathcal{I}}  \right\}} - \min{\left \{c_p |p \in P_{x \oplus \mathcal{A} \oplus \mathcal{I}}  \right\}} \geqslant 0.
\end{aligned}
\label{eq:parallel_supermod1}
\end{equation}
By Lemma~\ref{lemma:sets} we know that $P_{\mathcal{A} \oplus \mathcal{I}} \subseteq P_{x \oplus \mathcal{A} \oplus \mathcal{I}}$. Recalling that the minimum of a set is always greater or equal than the minimum of one of its supersets, \eqref{eq:parallel_supermod1} holds.

\textit{Case 2:} $\min{\left \{c_p |p \in P_{x \oplus \mathcal{A} \oplus \mathcal{I}} \right\}} < \min{\left \{c_p |p \in P_{\mathcal{Y} \oplus \mathcal{A} \oplus \mathcal{I}} \right\}}$. This implies that $$\min{\left \{c_p |p \in P_{x \oplus \mathcal{A} \oplus \mathcal{I}} \right\}} = \min{\left \{c_p |p \in P_{x \oplus \mathcal{Y} \oplus \mathcal{A} \oplus \mathcal{I}} \right\}},$$ Equation~\eqref{eq:parallel_supermod} becomes
\begin{equation}
\begin{aligned}
  \min{\left \{c_p |p \in P_{\mathcal{A} \oplus \mathcal{I}}  \right\}}  \geqslant 
  \min{\left \{c_p |p \in P_{\mathcal{Y} \oplus \mathcal{A} \oplus \mathcal{I}} \right\}}.
\end{aligned}
\label{eq:parallel_supermod2}
\end{equation}
By Lemma~\ref{lemma:sets} we know that $P_{\mathcal{A} \oplus \mathcal{I}} \subseteq P_{\mathcal{Y} \oplus \mathcal{A} \oplus \mathcal{I}}$. Recalling that the minimum of a set is always greater or equal than the minimum of one of its supersets, \eqref{eq:parallel_supermod2} holds. 
\end{proof}

\subsection{Proof of Theorem~\ref{theo:parallel_routing}}
\label{sec:proof_parallel_routing}

\begin{theorem}
$\Lambda_{\mathrm{SO}}$ and $\Lambda_{\mathrm{UE}}$ are supermodular set functions of the set $\mathcal{H}''$.
\end{theorem}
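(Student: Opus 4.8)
The plan is to collapse $\Lambda_{\mathrm{SO}}$ and $\Lambda_{\mathrm{UE}}$ on $\mathcal{H}''$ to a single scalar function of the number of available parallel paths, and then show that this scalar function is discretely convex, which is exactly the supermodularity inequality. Concretely, for $\mathcal{S} \subseteq \mathcal{H}''$ set $n(\mathcal{S}) := |P_{\mathcal{S} \oplus \mathcal{I}}|$. Because the trip path graphs in $\mathcal{H}''$ are parallel and internally disjoint, meeting the other additions and $G_\mathcal{I}$ only at $s$ and $t$, the graph $G_{\mathcal{S} \oplus \mathcal{I}}$ is a bundle of $n(\mathcal{S})$ internally vertex-disjoint $s$–$t$ paths, no hybrid paths are created, and each path is a copy of the same Greenshields link with parameters $l, v_{\mathrm{max}}, u$ (Eq.~\eqref{eq:greenshields}). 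Hence $n(\mathcal{S}) = |\mathcal{S}| + 1$, for $x \notin \mathcal{S}$ adding $x$ contributes exactly its single path so that $n(\mathcal{S} \cup \{x\}) = n(\mathcal{S}) + 1$, and $\mathcal{A} \subseteq \mathcal{B}$ gives $n(\mathcal{A}) \le n(\mathcal{B})$ (Lemma~\ref{lemma:sets}). By Lemma~\ref{lemma:ue_parallel} and Lemma~\ref{lemma:sor_parallel} the equal split $x_p = d/n(\mathcal{S})$ of Eq.~\eqref{eq:ass_parallel} is optimal for both UE and SO, and since the total travel time equals $\sum_p x_p c_p(x_p)$ in both cases (Eq.~\eqref{eq:sor_objective}, Eq.~\eqref{eq:user_time}), we obtain
\begin{equation*}
\Lambda_{\mathrm{SO}}(\mathcal{S}) = \Lambda_{\mathrm{UE}}(\mathcal{S}) = f\big(n(\mathcal{S})\big), \qquad f(n) := \frac{d\,l}{v_{\mathrm{max}}\left(1 - \frac{d}{n u}\right)}.
\end{equation*}
Feasibility of the instance (finiteness of every $\Lambda$ value, i.e.\ $d < n u$) is inherited under path additions since $n$ only grows.

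Substituting this reduction into the definition of supermodularity, the claim for $\mathcal{A} \subseteq \mathcal{B} \subseteq \mathcal{H}''$ and $x \in \mathcal{H}'' \setminus \mathcal{B}$ becomes
\begin{equation*}
f\big(n(\mathcal{A})\big) - f\big(n(\mathcal{A}) + 1\big) \ \geqslant\ f\big(n(\mathcal{B})\big) - f\big(n(\mathcal{B}) + 1\big),
\end{equation*}
so it suffices to prove that $g(n) := f(n) - f(n+1)$ is non-increasing over the feasible integers and then invoke $n(\mathcal{A}) \le n(\mathcal{B})$. I would get this from convexity of $f$: rewriting $\tfrac{nu}{nu-d} = 1 + \tfrac{d}{nu-d}$ gives $f(n) = \tfrac{d l}{v_{\mathrm{max}}}\big(1 + \tfrac{d}{n u - d}\big)$, a positive affine function of $\tfrac{1}{n u - d}$, and $n \mapsto \tfrac{1}{n u - d}$ is convex (and strictly decreasing) for $n u > d$. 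Hence $f$ is convex there, so $f(n+2) - 2 f(n+1) + f(n) \geqslant 0$, that is $g(n) \geqslant g(n+1)$, and iterating shows $g$ is non-increasing. This establishes the inequality and therefore the supermodularity of both $\Lambda_{\mathrm{SO}}$ and $\Lambda_{\mathrm{UE}}$ on $\mathcal{H}''$.

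The only genuine computation is the discrete-convexity step, and it becomes routine once $f$ is written as an affine function of $\tfrac{1}{nu-d}$; the part that requires care — and where the disjointness hypotheses defining $\mathcal{H}''$ are essential — is the combinatorial bookkeeping that adding the same trip path graph $x$ always increases the path count by exactly one, independently of the base set, since that is precisely what makes the reduction $\Lambda(\mathcal{S}) = f\big(n(\mathcal{S})\big)$ legitimate and lets the supermodularity inequality be read off from convexity of a one-variable function.
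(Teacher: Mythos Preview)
Your proposal is correct and follows the same overall reduction as the paper: both use Lemmas~\ref{lemma:ue_parallel} and~\ref{lemma:sor_parallel} to collapse $\Lambda_{\mathrm{SO}}$ and $\Lambda_{\mathrm{UE}}$ on $\mathcal{H}''$ to a single scalar function of the path count $n$, and both then reduce supermodularity to the inequality $f(n)-f(n+1)\geqslant f(n+n_y)-f(n+n_y+1)$ for $n_y\geqslant 0$.

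The difference is only in how that last inequality is verified. The paper substitutes the Greenshields formula directly, clears denominators, and arrives at the explicit algebraic certificate $n_y u(n_y u + u + 2nu - 2d)\geqslant 0$, checked using $u\geqslant d$ from Property~(4) of the trip spanning tree. You instead rewrite $f(n)=\tfrac{dl}{v_{\mathrm{max}}}\big(1+\tfrac{d}{nu-d}\big)$, observe that $n\mapsto \tfrac{1}{nu-d}$ is convex on the feasible range, and read off the discrete second-difference inequality from convexity. Your route is slightly more conceptual and makes transparent that the result hinges only on convexity of $f$ in $n$, so it would extend verbatim to any identical-parallel-path cost model with this property; the paper's algebraic route is more self-contained and makes the role of the hypothesis $u\geqslant d$ visible in the final inequality. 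Both are short and equally rigorous here.
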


\begin{proof}
As we have shown with Lemma~\ref{lemma:ue_parallel} and Lemma~\ref{lemma:sor_parallel} that $\Lambda_{\mathrm{UE}}$ and $\Lambda_{\mathrm{SO}}$ are equivalent on $\mathcal{H}''$, we proceed to prove the theorem for $\Lambda_{\mathrm{SO}}$, the result will be also valid for $\Lambda_{\mathrm{UE}}$.
As before, we need to prove that, given two subsets $\mathcal{A}$ and $\mathcal{B}$ such that $\mathcal{A} \subseteq \mathcal{B} \subseteq \mathcal{H}''$ and $x \in \mathcal{H}'' \setminus \mathcal{B}$, it holds that
\begin{equation*}
    \Lambda_{\mathrm{SO}}(\mathcal{A}) - \Lambda_{\mathrm{SO}}(\mathcal{A} \cup \{x\}) \geqslant \Lambda_{\mathrm{SO}}(\mathcal{B}) - \Lambda_{\mathrm{SO}}(\mathcal{B} \cup \{x\}).
\end{equation*}
Knowing that $\mathcal{A} \subseteq \mathcal{B}$ we can define $\mathcal{B} = \mathcal{Y} \cup \mathcal{A}$ with $\mathcal{A} \cap \mathcal{Y} = \varnothing$. This yields
\begin{equation*}
    \Lambda_{\mathrm{SO}}(\mathcal{A}) - \Lambda_{\mathrm{SO}}(\mathcal{A} \cup \{x\}) \geqslant \Lambda_{\mathrm{SO}}(\mathcal{Y} \cup \mathcal{A}) - \Lambda_{\mathrm{SO}}(\mathcal{Y} \cup \mathcal{A} \cup \{x\}).
\end{equation*}
We know that the trip spanning tree $G_\mathcal{I}$ contains only one path $|P_\mathcal{I}| = 1$.
We call the number of paths in $\mathcal{A}$ and $\mathcal{Y}$, $n_a \geqslant 0$ and $n_y \geqslant 0$ respectively. We also know that $|P_{\mathcal{A} \oplus \mathcal{I}}| = n_a + 1$. We fix $n = n_a+1$.

By Lemma~\ref{lemma:sor_parallel} we know that Equation~\eqref{eq:ass_parallel} is an optimal assignment, thus, using the definition of $\Lambda_{\mathrm{SO}}$, we can rewrite
\begin{multline*}
    d c_p \left (\frac{d}{n} \right ) - d c_p \left (\frac{d}{n+1} \right ) \geqslant \\
    d c_p \left (\frac{d}{n + n_y} \right ) - d c_p \left (\frac{d}{n+n_y+1} \right ).
\end{multline*}
By expanding using the Greenshields cost in Equation~\ref{eq:greenshields}, we have
\begin{multline*}
    d\frac{l}{v^{\mathrm{max}}\left (1-\frac{d}{nu}\right)} - d\frac{l}{v^{\mathrm{max}}\left (1-\frac{d}{(n+1)u}\right)} \geqslant \\
    d\frac{l}{v^{\mathrm{max}}\left (1-\frac{d}{(n+n_y)u}\right)} - d\frac{l}{v^{\mathrm{max}}\left (1-\frac{d}{(n+n_y+1)u}\right)}.
\end{multline*}
We know that $d,l,v^{\mathrm{max}},u \geqslant 0 $. Thus, we can apply a series of rewrites:
\begin{multline*}
    \frac{1}{(nu-d)((n+1)u-d)}  \geqslant \\
    \frac{1}{((n+n_y)u-d)((n+n_y+1)u-d)}.
\end{multline*}
Thanks to Property (4) of a trip spanning tree, we know that $u \geqslant d$, thus both denominators are non-negative and we have 
\begin{equation*}
    ((n+n_y)u-d)((n+n_y+1)u-d)  \geqslant 
     (nu-d)((n+1)u-d)
\end{equation*}
\begin{equation*}
    n_yu(n_yu+u+2nu-2d) \geqslant 0,
\end{equation*}
which, by the fact that $n \geqslant 1$ and $u \geqslant d$, is always true. 
\end{proof}

\bibliographystyle{bib/IEEEtran} 
\small{
\bibliography{bib/IEEEabrv,bibliography}}

\end{document}